\titleformat*{\section}{\sc\centering\large} 
\titleformat*{\subsection}{\bf} 
\titleformat*{\subsubsection}{\it} 
\newcommand{\be}{\begin{equation}}
\newcommand{\ee}{\end{equation}}
\newcommand{\vs}{\vspace{0.2cm}}
\newtheorem{Theorem}{Theorem}
\newtheorem{Proposition}[Theorem]{Proposition}
\newtheorem{Conjecture}[Theorem]{Conjecture}
\newtheorem{Corollary}[Theorem]{Corollary}
\newcommand{\dist}{d} 
\newcommand{\Sa}{{\rm S}^{1}} 
\begin{document}

\thispagestyle{empty}

\begin{center}
{\Large\bf On the existence of charged electrostatic black holes in
\vs

arbitrary topology}

\vs\vs\vs

{\sc Mart\'in Reiris}

{mreiris@cmat.edu.uy}
\vs

\vspace{.4cm}

{\it Centro de Matem\'atica, Universidad de la Rep\'ublica} 

{\it Montevideo, Uruguay}
\vs\vs

\begin{abstract}
The general classification of $3+1$-static black hole solutions of the Einstein equations, with or without matter, is central in General Relativity and important in geometry. In the realm of $\Sa$-symmetric {\it vacuum} spacetimes, a recent classification proved that, without restrictions on the topology or the asymptotic behavior, black hole solutions can be only of three kinds: (i) Schwarzschild black holes, (ii) Boost black holes, or (iii) Myers-Korotkin-Nicolai black holes, each one having its distinct asymptotic and topological type. In contrast to this, very little is known about the general classification of $\Sa$-symmetric static {\it electrovacuum} black holes although examples show that, on the large picture, there should be striking differences with respect to the vacuum case. A basic question then is whether or not there are charged analogs to the static vacuum black holes of types (i), (ii) and (iii). In this article we prove the remarkable fact that, while one can `charge' the Schwarzschild solution (resulting in a Reissner-Nordstr\"om spacetime) preserving the asymptotic, one cannot do the same to the Boosts and to the Myers-Korotkin-Nicolai solutions: the addition of a small or large electric charge, if possible at all, would transform entirely their asymptotic behavior. In particular, such vacuum solutions cannot be electromagnetically perturbed. The results of this paper are consistent but go far beyond the works of Karlovini and Von Unge on periodic analogs of the Reissner-Nordstr\"om black holes.

The type of result as well as the techniques used are based on comparison geometry a la Bakry-\'Emery and appear to be entirely novel in this context. The findings point to a complex interplay between asymptotic, topology and charge in spacetime dimension $3+1$, markedly different from what occurs in higher dimensions.
\end{abstract}

\end{center}

\section{Introduction}

Since the early years of General Relativity, the classification of static solutions has been quite an important problem. Already in 2016, Schwarzschild presented his (by now) celebrated solution that has been fundamental in our understanding of black holes. The exterior, static, Schwarzschild solutions are defined on $\mathbb{R}^{3}$ minus a 3-ball, are asymptotically flat, and have a horizon at their boundary which is the characteristic feature defining black holes. Of course, the Schwarzschild solution models also the exterior gravitational field of spherically symmetric (isolated) material bodies and is thus of central importance in astrophysics. Now, since already some decades there has been increasing theoretical interest in the understanding of static black hole solutions (i.e. having a horizon) with more exotic topology, matter, and asymptotic behavior. This article investigates various aspects of the geometry and topology of electro-vacuum static and $\Sa$-symmetric 3+1 black hole solutions, which, as we will see, entangle new phenomena relative to the vacuum case.  

To frame the discussion in the following paragraphs, let us begin recalling what is known about the relation between the topology and the asymptotic behavior for static {\it vacuum} solutions, (not necessarily black hole solutions). First, the celebrated uniqueness theorem of the Schwarzschild solutions by Israel \cite{Israel}, Robinson \cite{RobinsonII} and Bunting-Masood \cite{MR876598}, asserts that a $3+1$ asymptotically flat static vacuum black hole is necessarily Schwarzschild. The fundamental hypothesis here is asymptotic flatness that ends up constraining the spatial topology to that of $\mathbb{R}^{3}$ minus a 3-ball. In particular, it is impossible to achieve static equilibrium between several black holes when the solution is vacuum and asymptotically flat. Conversely, if the spatial topology of a static vacuum black hole solution is that of Schwarzschild and the solution is metrically complete, then it is indeed asymptotically flat and therefore Schwarzschild again \cite{MR3233266}, \cite{MR3233267}. Thus, in vacuum at least, the topology and the asymptotic are tightly related. Similarly, if a static vacuum solution is asymptotically flat and without horizons (i.e. not a black hole solution) then Lichnerowicz showed that the solution is the Minkowski spacetime \cite{Lichnerowicz}. Conversely, if the spatial topology is that of the Minkowski spacetime and the vacuum solution is metrically complete, then Anderson \cite{MR1809792} proved that the solution must be the Minkowski spacetime. More in general, it was shown in \cite{PartI}, \cite{PartII} that, without topological and asymptotic assumptions, a metrically complete static vacuum solution with a non-empty horizon must be either: (i) a Schwarzschild black hole, (ii) a Boost black hole, or (iii) a black hole of Myers-Korotkin-Nicolai (MKN) type (see below). This result was refined in \cite{Reiris_2019} but assuming $\Sa$-symmetry, showing that a $\Sa$-symmetric solution of MKN-type is indeed a MKN-solution. 

The universal MKN solutions \cite{PhysRevD.35.455}, \cite{94aperiodic}, \cite{KOROTKIN1994229} are axisymmetric, static solutions, having a periodic array of infinite coaxial black holes ({\it universal} is because the spatial manifold is simply connected). In Weyl coordinates the metric is of the form,
\be
g=-e^{\omega}dt^{2}+e^{-\omega}(e^{2k}(dz^{2}+d\rho^{2})+\rho^{2}d\varphi^{2}).
\ee
The exponents $\omega$ and $k$ are independent on the axial coordinate $\varphi$ and $(z,\rho)\in \mathbb{R}_{z}\times \mathbb{R}^{+}_{\rho}$. The axis and the horizons are located over $\{\rho=0\}$ and the exponents $\omega$ and $k$ are suitably singular over the periodic array of segments representing the horizons (Weyl's coordinates degenerate on them). As $\rho\rightarrow \infty$ the metric gets asymptotically Kasner, namely,
\be
g\sim -c^{2}\rho^{\alpha}dt^{2}+a^{2}\rho^{\alpha^{2}/2-\alpha}(dz^{2}+d\rho^{2})+b^{2}\rho^{2-\alpha}d\varphi^{2}
\ee
where $0<\alpha<2$. Observe that because $\alpha<2$, the norm of the axial Killing field $\partial_{\varphi}$ tends to infinity. This information will be used later. Being periodic (the translation group on the variable $z$ is $\mathbb{Z}$), the universal MKN solutions can be quotient so that the quotient possesses only an arbitrary but finite number of black hole horizons. These are the solutions that we call MKN in this paper. The quotient manifolds are diffeomorphic to open 3-torus minus a finite number of open 3-balls (see Figure \ref{Figure21}).
\begin{figure}[h]
\centering
\includegraphics[width=6cm, height=4cm]{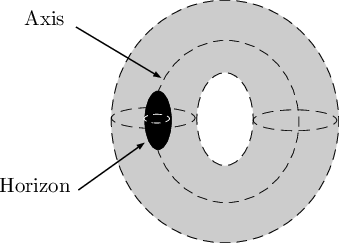}
\caption{The grey region is the spatial manifold of the MKN solutions, topologically an open solid 3-torus minus the horizons, one in this case. The only boundary of the manifold is the horizon.}
\label{Figure21}
\end{figure}

In the presence of electromagnetic matter (electric, magnetic, or both), very little is known about a general classification of static solutions. The following well-known examples show that the space of solutions is however more diverse and more complex than in vacuum, and that there is a more subtle interplay between topology and asymptotic behavior. First, asymptotically flat, charged, electrostatic black hole solutions are known to be the Reissner-Nordstr\"om solutions \cite{1988CQGra...5L.155R}. Roughly speaking, these solutions are the electrically charged Schwarzschild black holes. However, the magnetostatic Schwarzschild-Melvin solutions, which are metrically complete and have the same spatial topology as the Schwarzschild or the Reissner-Nordstr\"om solutions are asymptotic to the magnetized Melvin universe and are therefore not asymptotically flat. Roughly speaking, these solutions represent an uncharged Schwarzschild black hole embedded inside a magnetostatic universe. Similarly, the Melvin and the Bertotti-Robinson solutions are examples of magnetostatic metrically complete solutions having the same spatial topology of the Minkowski spacetime but that are not asymptotically flat. Garfinkle and Glass \cite{Garfinkle_2011} conjectured that in the presence of magnetic energy only, no other metrically complete static solution exists in that topology. 

We now wonder whether there exist charged electrostatic or magnetostatic black hole solutions in topologies like those of the MKN black holes. In this article, we prove the remarkable feature that, while the Schwarzschild black hole can be perturbed by adding a small electric charge (getting thus a $\Sa$-symmetric Reissner-Nordstr\"om black hole), the Boosts and the Myers-Korotkin-Nicolai cannot: the addition of any electric charge, small or big, keeping the $\Sa$-symmetry, would entirely affect the asymptotic behavior. We conjecture that the Boosts and the Myers-Korotkin-Nicolai topologies indeed do not accept charged electrostatic or magnetostatic black hole solutions. We note however that one can quotient the Melvin solution by a translation along the $z$-axis to obtain indeed a magnetostatic $\Sa$-symmetric and metrically complete solution whose spatial topology is that of an open solid $3$-torus. Yet such a solution is uncharged (the end of the manifold has no magnetic flux). 

The main result to be proved in this article, from which the main conclusion mentioned in the previous paragraph will follow, is a theorem about the asymptotic behavior of $\Sa$-symmetric electrostatic or magnetostatic charged solutions on spatial 3-manifolds of the form $[0,\infty)_{\rho}\times \Sa \times \Sa$, with one of the $\Sa$-factors being the orbits of the $\Sa$-symmetry (infinity is as $\rho\rightarrow \infty$). When discussing the applications a few lines below, this manifold will be thought of as just the asymptotic region of a certain solution defined on a larger manifold. The main theorem, that we state right below, says that if an $\Sa$-symmetric electrostatic or magnetostatic charged solution on such a 3-manifold has the norm of the Killing field generating the $\Sa$-symmetry bounded away from zero, then the lapse $N$ must tend to zero at infinity. 

\begin{Theorem}\label{MT} Consider an $\Sa$-symmetric charged electrostatic or magnetostatic spacetime metric in Weyl coordinates,
\be\label{FS}
{\bf g}=-e^{2\bar{u}}dt^{2}+e^{-2\bar{u}}(e^{2\bar{k}}(d\rho^{2}+dz^{2})+\rho^{2}d\varphi^{2}),
\ee
and on the $4$-manifold ${\bf M}=\mathbb{R}_{t}\times [\rho_{0},\infty)_{\rho}\times \Sa_{z}\times \Sa_{\varphi}$, ($\rho_{0}>0$). If the spatial Riemannian $3$-manifold is metrically complete and the norm of the $\Sa$-Killing field $\partial_{\varphi}$ is bounded below away from zero, then the lapse $N=e^{\bar{u}}$ tends to zero at infinity (i.e. as $\rho\rightarrow \infty$).
\end{Theorem}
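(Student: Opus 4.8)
The plan is to reduce the field equations to a pair of elliptic equations on the flat half-cylinder and to extract from the electrostatic constraints a single monotone quantity whose behaviour is controlled, à la Bakry--Émery, by the nonnegative energy density created by the charge. First I would write the reduced system in Weyl coordinates. With $\hat\Delta=\partial_\rho^2+\rho^{-1}\partial_\rho+\partial_z^2$ the flat cylindrical Laplacian on $\varphi$-independent functions, the static Einstein--Maxwell system becomes
\[
\hat\Delta\bar u = e^{-2\bar u}\big(\phi_\rho^2+\phi_z^2\big),\qquad \partial_\rho\big(e^{-2\bar u}\rho\,\phi_\rho\big)+\partial_z\big(e^{-2\bar u}\rho\,\phi_z\big)=0,
\]
together with the quadrature fixing $\bar k$ from $(\bar u,\phi)$; here $\phi$ is the electric potential, the magnetostatic case being dual. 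Two facts stand out: $\bar u$ is subharmonic (equivalently $N\Delta_g N=|\nabla\phi|^2\ge 0$, and $R_g=2N^{-2}|\nabla\phi|^2\ge 0$), and the flux $I(\rho):=\int_{S^1_z}e^{-2\bar u}\rho\,\phi_\rho\,dz$ is independent of $\rho$ and equals a multiple of the total charge, which is nonzero precisely because the solution is charged. I would also translate the hypotheses: $|\partial_\varphi|^2=e^{-2\bar u}\rho^2\ge c_0$ reads $\bar u\le \ln\rho+C$, while metric completeness of the end reads $\int^\infty e^{\bar k-\bar u}\,d\rho=\infty$.

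The geometric heart is a monotonicity attached to this nonnegative curvature. Averaging the first equation over the circle $S^1_z$, writing $\bar u_0(\rho)$ for the mean and $S(\rho)\ge 0$ for the averaged right-hand side, one gets $(\rho\,\bar u_0')'=\rho\,S\ge 0$, so $m(\rho):=\rho\,\bar u_0'(\rho)$ is nondecreasing; one recognizes $m$ as (a multiple of) the weighted mean curvature of the tori $\{\rho=\mathrm{const}\}$ in the warped product $g=g_Q+W^2\,d\varphi^2$, $W=e^{-\bar u}\rho\ge c_0$. The bound $\bar u\le\ln\rho+C$ forces $m_\infty:=\lim_{\rho\to\infty}m(\rho)\le 1$, and in particular $\int^\infty \rho\,S\,d\rho=m_\infty-m(\rho_0)<\infty$. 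Thus the forcing produced by the charge is \emph{globally integrable}, which, as the next step shows, is compatible with a nonzero charge only if $e^{-2\bar u}$ blows up.

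Quantitatively, I would feed the nonzero charge into this integrability. Cauchy--Schwarz applied to $I(\rho)=\mathrm{const}\ne 0$ gives $\big(I/\rho\big)^2\le h(\rho)\int_{S^1_z}e^{-2\bar u}\phi_\rho^2\,dz$ with $h(\rho):=\int_{S^1_z}e^{-2\bar u}\,dz$, whence $\rho\,S(\rho)\ge \mathrm{const}\cdot Q^2/\big(\rho\,h(\rho)\big)$. Combined with $\int^\infty \rho\,S\,d\rho<\infty$ this yields $\int^\infty \frac{d\rho}{\rho\,h(\rho)}<\infty$, which is impossible if $h$ stays bounded; therefore $h(\rho)=\int_{S^1_z}e^{-2\bar u}\,dz\to\infty$ (at least along a sequence $\rho_k\to\infty$, then everywhere by subharmonicity), i.e. $\inf_{z}\bar u(\rho,\cdot)\to-\infty$ along the end. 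This is exactly opposite to the vacuum (MKN) behaviour $\bar u\sim\tfrac{\alpha}{2}\ln\rho\to+\infty$, where $S\equiv 0$ and $m$ is a positive constant; the charge supplies a source that overwhelms and reverses the growth.

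The last step, and the main obstacle, is to promote "$\inf_z\bar u\to-\infty$" to "$N=e^{\bar u}\to 0$", that is, to bound the oscillation $\mathrm{osc}_{z}\,\bar u(\rho,\cdot)$ on the circles $\{\rho=\mathrm{const}\}$ as $\rho\to\infty$. This is where comparison geometry à la Bakry--Émery must enter essentially: the Maxwell stress renders the relevant (Bakry--Émery) Ricci tensor indefinite---negative in the direction of $\nabla\phi$---so no naive nonnegative-Ricci comparison applies, and one must instead work on the weighted quotient surface $(Q,g_Q,W\,dA)$, of synthetic dimension three, where the curvature can be organized, to obtain a Harnack/gradient estimate controlling the oscillation uniformly on the end (here completeness is used to run the estimate globally). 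Granting such a bound, $\inf_z\bar u\to-\infty$ upgrades to $\sup_z\bar u\to-\infty$, so $N\to 0$. I expect the derivation of this uniform oscillation control under the indefinite stress to be the technically delicate part of the proof.
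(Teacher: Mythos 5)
Your first three paragraphs are correct and, as far as they go, genuinely different from the paper's argument: averaging the lapse equation over $\Sa_{z}$ to get the monotone quantity $m(\rho)=\rho\,\bar u_{0}'(\rho)$, bounding $m_{\infty}\leq 1$ from the Killing--norm hypothesis $\bar u\leq \ln\rho+C$, deducing $\int^{\infty}\rho\,S\,d\rho<\infty$, and then playing this integrability against the conserved nonzero flux via Cauchy--Schwarz is a clean, elementary route to the conclusion that $\inf_{z}\bar u(\rho_{k},\cdot)\to-\infty$ along a sequence $\rho_{k}\to\infty$. (The parenthetical ``then everywhere by subharmonicity'' is not justified --- subharmonicity of $\bar u$ propagates upper bounds on $\max\bar u$, not lower bounds on $h(\rho)=\int e^{-2\bar u}dz$, and $e^{-2\bar u}$ itself is neither sub- nor superharmonic --- but this is minor next to the main issue.)

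The genuine gap is your last step, which you yourself flag as conditional: upgrading $\inf_{z}\bar u\to-\infty$ along a sequence to $\sup_{z}\bar u\to-\infty$, i.e.\ controlling ${\rm osc}_{z}\,\bar u$ on the circles. This is not a technical refinement to be ``granted''; it is essentially the entire content of the paper's proof, and your proposal contains no mechanism for it. You correctly observe that the Bakry--\'Emery Ricci tensor of the reduced $2$-geometry is indefinite in the original variables $(\bar u,\bar v)$, so no comparison or Harnack estimate runs directly; but the fix is not to pass to a ``weighted quotient surface of synthetic dimension three''. The paper's resolution is a Bonnor-type field transformation $(\bar u,\bar v,\bar k)\to(u,v,k)$ turning the system into the $f$-harmonic map equations into $\mathbb{H}^{2}$ (the vacuum Ernst structure), for which the Bakry--\'Emery Ricci of the auxiliary metric $\hat\gamma=e^{8k}(d\rho^{2}+dz^{2})$ \emph{is} nonnegative; the hypothesis $|\partial_{\varphi}|\geq c>0$ is then used not (only) as the pointwise bound $\bar u\leq\ln\rho+C$ but to prove metric completeness of $\hat\gamma$, without which the gradient estimates $|\nabla u|^{2}+e^{2u}|\nabla v|^{2}\leq c/\hat s^{2}$, $|\nabla\ln\rho|\leq c/\hat s$ cannot be run globally. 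Even with those estimates in hand, the oscillation control is not obtained by a Harnack inequality: one needs a matching \emph{lower} bound $|\nabla\rho|/\rho\geq b/\hat s$ along a divergent sequence (Proposition \ref{SEXI}), sub-quadratic area growth, and a collapse/blow-down argument producing an exact $\Sa_{z}\times\Sa_{\varphi}$-symmetric charged limit, which is then excluded (or forced to have $N\to0$) by the ODE classification of Section \ref{ESSS}. Without some substitute for this chain --- and your proposal offers none beyond the expectation that an oscillation estimate exists --- the argument proves only that $N$ fails to be bounded away from zero along a sequence of radii, which is strictly weaker than the theorem.
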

This theorem implies that there are no charged electrostatic (or magnetostatic) black holes on a 3-manifold with its manifold-end (assume only one) as in the theorem and with the norm of the $S^{1}$-Killing field bounded away from zero at infinity. To see this, we argue by contradiction. Suppose that a solution like that exists. Let $\Sigma$ be its spatial 3-manifold with compact boundary (the horizon) and with its manifold-end diffeomorphic to $[0,\infty)\times \Sa\times \Sa$. As we are assuming that the solution is a black hole, we have $N|_{\partial \Sigma}=0$. On the other hand by Theorem \ref{MT} we have $N\rightarrow 0$ at infinity. Recalling finally that the lapse equation for an electrostatic solution is, 
\begin{equation}
\Delta N = |E|^{2}N,
\end{equation}
then the maximum principle tells us that $N$ must be identically zero, reaching a contradiction.

This result is consistent but goes far beyond the works of Karlovini and Von Unge on periodic analogs of the Reissner-Nordstr\"om black holes \cite{Karlovini_vonUnge}, where the periodic analogs of the Reissner-Nordstr\'om they construct have a singularity surrounding the horizon.

Now, as the norm of the $\Sa$-Killing field of the MKN black holes diverges at infinity, we conclude that it is impossible to perturb (i.e. barely deform) such vacuum black holes by adding a small charge, because such a perturbation would keep the norm of the $\Sa$-Killing field also bounded away from zero. We summarize this second striking consequence below.
\begin{Corollary}\label{COR2} There are no electrostatic $\Sa$-symmetric perturbations of the static Myers-Korotkin-Nicolai black holes with non-zero charge (no matter how small).
\end{Corollary}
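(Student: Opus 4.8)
The plan is to obtain Corollary \ref{COR2} as a direct consequence of Theorem \ref{MT} together with the maximum principle applied to the lapse equation, the only genuinely new ingredient being a continuity argument guaranteeing that the hypotheses of Theorem \ref{MT} persist under a small electrostatic deformation of the MKN background. First I would recall, from the asymptotically Kasner form of the MKN metric, that the squared norm of the axial Killing field behaves like $b^{2}\rho^{2-\alpha}$ with $0<\alpha<2$, so that $|\partial_{\varphi}|\to\infty$ as $\rho\to\infty$; in particular it is bounded below away from zero throughout the manifold-end. Since the quotient MKN manifold is an open solid $3$-torus minus finitely many balls, its single end is diffeomorphic to $[0,\infty)_{\rho}\times\Sa_{z}\times\Sa_{\varphi}$, which is precisely the geometry to which Theorem \ref{MT} applies.

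Next I would set up the contradiction. Suppose an $\Sa$-symmetric electrostatic solution with non-zero charge arises as a \emph{perturbation} of a given MKN black hole. Being a perturbation, i.e. small in an appropriate (say weighted $C^{0}$ or $C^{2}$) topology on the end, it keeps the spatial three-manifold metrically complete, keeps the end diffeomorphic to $[0,\infty)\times\Sa\times\Sa$, and keeps $|\partial_{\varphi}|$ bounded below away from zero: the unperturbed norm already diverges, and a nearby metric can decrease it by at most a controlled amount. Theorem \ref{MT} then applies verbatim to the asymptotic region and yields $N=e^{\bar{u}}\to 0$ as $\rho\to\infty$.

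Finally I would combine this with the horizon condition and the lapse equation. As the solution is a black hole, $N$ vanishes on the compact horizon $\partial\Sigma$ and is strictly positive in the domain of outer communication, while $N\to 0$ at infinity by the previous step. The electrostatic lapse equation $\Delta N=|E|^{2}N$ shows that $N$ is subharmonic on the open set $\{N>0\}$, since $|E|^{2}N\ge 0$ there. Exhausting the end by the level sets $\{\rho=R\}$ and using the decay of $N$ as $\rho\to\infty$ together with $N=0$ on $\partial\Sigma$, the maximum principle forces $\sup N\le 0$; hence $N\equiv 0$, contradicting positivity in the interior. Therefore no such charged perturbation can exist.

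I expect the main obstacle to be the second paragraph: making the word ``perturbation'' precise and proving that the lower bound on $|\partial_{\varphi}|$, and hence the applicability of Theorem \ref{MT}, genuinely survives the deformation. Once the deformation is measured in a topology that controls the metric on the end, the divergence of the background norm makes the lower bound automatic; the remaining subtlety is only to confirm that the perturbed manifold retains a single end of the prescribed product type and that the completeness and $\Sa$-symmetry assumptions are preserved. The maximum-principle step is then entirely routine.
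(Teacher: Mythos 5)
Your proposal is correct and follows essentially the same route as the paper: the divergence of $|\partial_{\varphi}|$ for the MKN background guarantees that a small perturbation keeps this norm bounded below, Theorem \ref{MT} then forces $N\to 0$ at infinity, and the maximum principle applied to $\Delta N=|E|^{2}N$ together with $N=0$ on the horizon yields $N\equiv 0$, a contradiction. The paper treats the persistence of the lower bound under perturbation as immediate rather than isolating it as a separate continuity step, but the logic is identical to yours.
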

More in general, we conjecture that $\Sa$-symmetric charged electrostatic solutions do not exist in configurations where the fixed set of the $\Sa$-symmetry (the `axis') is compact. In particular, we conjecture that no such type of solution exists on topologies like those of the MKN-black holes.  
\begin{Conjecture} The are no $\Sa$-symmetric black holes with compact axis and having non-zero charge at each of its ends.
\end{Conjecture}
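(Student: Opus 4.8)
The plan is to argue by contradiction and to reduce the general compact-axis situation to the asymptotic setting already controlled by Theorem \ref{MT}, generalizing the route that led to Corollary \ref{COR2}. Suppose a counterexample exists: a metrically complete, $\Sa$-symmetric, static electrostatic or magnetostatic black hole with compact axis (the fixed-point set $\{\partial_\varphi=0\}$) and non-zero charge at each of its ends, with lapse $N$ vanishing on the horizon $\partial\Sigma$ and $N>0$ in the exterior. The endgame is a global maximum principle: since $N\ge 0$, $N|_{\partial\Sigma}=0$, and $N$ is subharmonic ($\Delta N=|E|^2N\ge 0$, with the magnetostatic lapse equation giving the same non-negative source), it suffices to prove that $N\to 0$ at every end to force $N\equiv 0$, a contradiction. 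The whole problem thus reduces to verifying the hypotheses of Theorem \ref{MT} on each end.

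First I would analyze the topology of the ends. Compactness of the axis means every end is disjoint from it, so $\partial_\varphi$ vanishes nowhere near infinity and the $\Sa$-action is free there; each end is then a circle bundle over a surface, and I would show its orbit space is a cylinder $[\rho_0,\infty)_\rho\times\Sa_z$, the factor $\Sa_z$ appearing precisely because it parametrizes the compact axis direction. Together with staticity and axisymmetry this should place each end in Weyl form on $[\rho_0,\infty)_\rho\times\Sa_z\times\Sa_\varphi$, exactly the manifold of Theorem \ref{MT}. Making this rigorous requires the standard structure theory of static axisymmetric ends together with control on the harmonic coordinate $\rho$, and uses the finiteness of topology implicit in ``black hole with compact axis.''

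The decisive step, which I expect to be the main obstacle, is to establish the uniform lower bound $\eta:=|\partial_\varphi|_g\ge c>0$ as $\rho\to\infty$ on each end. Compactness of the axis gives only \emph{pointwise} positivity of $\eta$; excluding collapse of the axial orbits at infinity is a genuinely asymptotic statement that Theorem \ref{MT} assumes rather than proves, and it is here that charge and topology must be combined. Non-vanishing charge means the flux $\frac{1}{4\pi}\int_{\Sa_z\times\Sa_\varphi}\star F$ through the torus cross-sections is a non-zero constant in $\rho$; if the axial orbits collapsed, maintaining this flux would force the field strength, hence the electromagnetic energy density entering $\Delta N=|E|^2N$, to be non-integrable, which should be incompatible with completeness and with the subharmonicity of $N$. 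I would try to convert this heuristic into an a priori estimate by re-running the Bakry-\'Emery comparison of Theorem \ref{MT} over the two-dimensional orbit space, now carrying the fiber size $\eta$ as an additional weight, so that the (indefinite) Maxwell contribution to the weighted Ricci tensor and the conserved charge constrain one another. The crux, and the genuinely new difficulty relative to Theorem \ref{MT}, is that there $\eta\ge c$ was an \emph{input}, whereas here it must be produced as an \emph{output}.

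Once $\eta\ge c>0$ is secured at each end, Theorem \ref{MT} (which already covers both the electric and magnetic cases) yields $N\to 0$ at every end; combined with $N|_{\partial\Sigma}=0$ and $\Delta N\ge 0$, the maximum principle gives $N\equiv 0$, contradicting $N>0$ in the exterior. The remaining points I regard as routine: recording the magnetostatic lapse equation and the positivity of its source, and checking that the convergence $N\to 0$ is uniform enough at the ends to run the maximum principle globally on the (non-compact) exterior.
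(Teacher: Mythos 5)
The statement you are addressing is stated in the paper as a \emph{Conjecture}, and the paper offers no proof of it: immediately after stating it, the author remarks that ``this conjecture would follow if one could prove that the norm of the $\Sa$-Killing field of one such solution must be bounded below away from zero at infinity.'' Your proposal reproduces exactly this reduction --- put each end in Weyl form on $[\rho_{0},\infty)_{\rho}\times\Sa_{z}\times\Sa_{\varphi}$, establish $|\partial_{\varphi}|\geq c>0$ there, invoke Theorem \ref{MT} to get $N\to 0$ at each end, and conclude $N\equiv 0$ by the maximum principle applied to $\Delta N=|E|^{2}N$. That framing is correct and is the route the paper itself envisions. But the decisive step, which you candidly flag as ``the main obstacle,'' is left as a heuristic rather than an argument, and it is precisely the step that makes this a conjecture rather than a theorem.

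Concretely, the claim that collapsing axial orbits plus non-zero charge force a non-integrable energy density does not follow from anything established in the paper. The conserved charge is $Q=\int_{\rho^{-1}_{x}}\rho e^{-2\bar{u}}\nabla_{n}\bar{v}\,d\ell$, and there is no evident obstruction to having $|\partial_{\varphi}|=\rho e^{-\bar{u}}\to 0$ while this flux stays fixed; the field strength need not blow up in a way that contradicts completeness or the subharmonicity of $N$. Moreover, your suggestion to ``re-run the Bakry-\'Emery comparison carrying the fiber size $\eta$ as an additional weight'' runs into a circularity: in the paper, the entire comparison-geometry machinery (the gradient estimates (\ref{EST}), the asymptotic analysis, the collapse-with-bounded-curvature limits) is predicated on the metric completeness of $\hat{\gamma}=e^{8k}(d\rho^{2}+dz^{2})$, and that completeness is \emph{derived from} the lower bound $|\partial_{\varphi}|\geq c$ via $\hat{\gamma}\geq c^{2}\xi$. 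Without the lower bound you cannot even start the estimates you propose to use to prove it. So the gap is not a routine verification but the genuinely open core of the problem; as written, your proposal proves the conditional statement that the paper already records, not the conjecture itself.
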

This conjecture would follow if one could prove that the norm of the $\Sa$-Killing field of one such solution must bounded below away from zero at infinity. 

Let us say finally a few words about the arguments behind the proof of Theorem \ref{MT}. The idea is simple and is based on the observation that the lapse function of charged solutions as (\ref{FS}) but with an extra symmetry in $z$ (i.e. with metric components $z$-independent) tends to zero as $\rho$ tends to infinity. This property will be shown in section \ref{ESSS} after solving explicitly the electrostatic equations with a $\Sa_{z}\times \Sa_{\varphi}$-symmetry and inspecting the form of the lapse. Then, the rough steps to prove Theorem \ref{MT} are the following: establish first enough a priori information on the main fields and the geometry at infinity (i.e. as $\rho\rightarrow \infty$) to show that an additional asymptotic symmetry forms along the $z$-direction, and then make contact with the previous observation to conclude that the lapse tends indeed asymptotically to zero. Although it looks simple, the implementation is technically involved. One of the main problems one faces is that the structure of the electrostatic equations isn't suitable for extracting a priori estimates at infinity. This is not merely a technical problem. To circumvent this and to extract the desired estimates we will make a Bonnor-type transformation of the main fields, to obtain new ones satisfying equations with a much better structure ($f$-harmonic map equations). Yet, to extract a priori gradient estimates for the main fields one needs to work with a particular conformal metric which will be shown to be metrically complete only after using the lower positive bound of the norm of the $\Sa$-Killing field. All this will be explained in detail in the next section.

In section \ref{EEB} we recall the reduced Einstein-Maxwell equations and write down the mentioned transformation of the main fields as well as the new equations. After that, we explain how to use the lower bound of the norm of the $\Sa$-Killing field to guarantee the metric completeness of the new 2-metric $\hat{\gamma}$ to be used. The decay estimates or gradient estimates are shown in section \ref{GE}. Section \ref{ESSS} investigates electrostatic solutions with a $\Sa_{z}\times \Sa_{\varphi}$ symmetry. The proof of Theorem \ref{MT} is done in section \ref{PMT}. 

\section{The reduced electrostatic equations and a Bonnor-type transformation.} \label{EEB}

In this section, we recall the reduced electrostatic equations (see \cite{MR2003646}), explain the mentioned transformation, and explain why we need to work with a particular conformal metric. The analysis of the resulting equations and the deduction of the decay estimates is made in the next section.

The electric field $E$, which we assume has no $\partial_{\varphi}$-component (in the presence of an axis such component is always zero, see \cite{BlackHolesHeusler}), is locally the gradient of an electrostatic potential $\bar{v}=\bar{v}(z,\rho)$. Note that such potential does not exist globally in $[\rho_{0},\infty)_{\rho}\times \Sa_{\varphi}$ but it does in the universal cover where we will often work. So, in addition to the metric exponents $\bar{u}$ and $\bar{k}$ in (\ref{FS}), we have $\bar{v}$. We call $(\bar{u},\bar{v},\bar{k})$ the electrostatic data. The fields $\bar{u}$ and $\bar{v}$ satisfy the electrostatic equations,
\begin{align}
\label{EI} & \Delta_{f}\bar{u} = e^{-2\bar{u}}|\nabla \bar{v}|^{2},\\
\label{EII} & \Delta_{f}\bar{v} = 2\langle \nabla \bar{v}, \nabla \bar{u}\rangle,
\end{align}
where $f=\ln \rho$ and $\Delta_{f}\psi = \Delta \psi + \langle \nabla f, \nabla \psi\rangle$. Here, the Laplacian and inner product are taken with respect to the flat metric $d\rho^{2}+dz^{2}$ but because of the conformal invariance of the equations (\ref{EI}) and (\ref{EII}) they can be taken too with respect to the metric $\bar{\gamma}=e^{2\bar{k}}(d\rho^{2}+dz^{2})$ or to any other metric conformally related to the flat metric. Equation (\ref{EI}) is essentially the lapse equation and (\ref{EII}) is the Gauss equation of electromagnetism. The exponent $\bar{k}$ satisfies the integrability equations,
\begin{align}
\label{EIV} & \bar{k}_{\rho} = \rho((\bar{u}_{\rho}^{2}-\bar{u}_{z}^{2})-(\bar{v}_{\rho}^{2}-\bar{v}_{z}^{2})e^{-2\bar{u}}),\\
\label{EV} & \bar{k}_{z} =2\rho(\bar{u}_{\rho}\bar{u}_{z}-\bar{v}_{\rho}\bar{v}_{z}e^{-2\bar{u}}).
\end{align}
that can always be solved by line integration. 

Now, the Bonnor-type of transformation that we will use, $(\bar{u},\bar{v},\bar{k})\rightarrow (u,v,k)$ is given by  (\cite{MR2003646} pg. 524, Theorem 34.4),
\begin{align}
u & = \bar{u} - \ln \rho,\\
\nabla v & = \rho e^{-2\bar{u}} * \nabla \bar{v},\\
4k & = \bar{k}-2\bar{u}+\ln \rho.
\end{align}
Here $*$ is the Hodge-star with respect to the flat metric $d\rho^{2}+dz^{2}$. These variables satisfy,
\begin{align}
\label{SI} & \Delta_{f} u = e^{2u}|\nabla v|^{2},\\
\label{SII} & \Delta_{f} v = -2\langle \nabla u,\nabla v\rangle.
\end{align}
Again, because of the conformal invariance of the equations, the Laplacian and the inner product can be taken with respect to the flat metric $d\rho^{2}+dz^{2}$ or with respect to any metric conformally related to it. The integrability equations for $k$ are,
\begin{align}
\label{SIV} & k_{\rho}=\frac{\rho}{4}((u_{\rho}^{2}-u_{z}^{2})+(v_{\rho}^{2}-v_{z}^{2})e^{2u}),\\
\label{SV} & k_{z}=\frac{\rho}{2}(u_{\rho}u_{z}+v_{\rho}v_{z}e^{2u}).
\end{align}
Crucially, the equations (\ref{SI}) and (\ref{SII}) are now $f$-harmonic map equations into the hyperbolic plane with $2$-metric $e^{2u}(dv^{2}+du^{2})$ (for the concept of $f$-harmonic map see \cite{Chen:2012aa}). As important as this, the Ricci tensor $Ric$ of the $2$-metric $\gamma = e^{2k}(d\rho^{2}+dz^{2})$ is given by,
\be
\label{SIII} Ric=\frac{\nabla\nabla \rho}{\rho}+\frac{1}{2}(\nabla u\nabla u+e^{2u} \nabla v\nabla v).
\ee
Thus the {\it Bakry-\'Emery Ricci tensor} $Ric_{f}=Ric-\nabla\nabla f$ (called $f$-Ricci tensor from now on) takes the form,
\be
\label{BER}
Ric_{f}=\frac{1}{2}(\nabla u\nabla u+e^{2u} \nabla v\nabla v) + \nabla f\nabla f.
\ee
The $f$-harmonic map equations (\ref{SI})-(\ref{SII}) for $(u,v)$ and the expression (\ref{BER}) for the $f$-Ricci tensor of the metric $\gamma=e^{2k}(d\rho^{2}+dz^{2})$ are very structured and well suited to obtain decay estimates. In fact, (\ref{SI})-(\ref{SII})) are the Ernst equations for stationary and axisymmetric solutions of the Einstein vacuum equations. 

We will explain below the nature of such estimates and how to obtain them but there is a potential problem to notice beforehand and is that the metric $\gamma$ doesn't have to be necessarily metrically complete as $\rho\rightarrow \infty$. So far, the only information available is that the spatial part of the metric (\ref{FS}), $g = e^{-2\bar{u}}(e^{2\bar{k}}(d\rho^{2}+dz^{2})+\rho^{2}d\varphi^{2})$, is complete. This implies that the $2$-metric $\xi = e^{-2\bar{u}+2\bar{k}}(d\rho^{2}+dz^{2})$ on $[\rho_{0},\infty)_{\rho}\times \Sa_{z}$ is metrically complete but not that the metric $\gamma$ is. Even using the lower bound on the norm of the axial Killing field, $|\partial_\varphi|=\rho e^{-\bar{u}}\geq c>0$, one does not get the necessary metric completeness. But with the bound for $|\partial_{\varphi}|$ one does get the completeness of the metric $\hat{\gamma}=e^{8k}(d\rho^{2}+dz^{2})$ because,
\begin{align}
\hat{\gamma}&=e^{8k}(d\rho^{2}+dz^{2})=e^{-2\bar{u}+2k-2\bar{u}+2\ln \rho}(d\rho^{2}+dz^{2})=\\
&=|\partial_{\varphi}|^{2}e^{-2\bar{u}+2k}(d\rho^{2}+dz^{2})\geq c^{2}\xi
\end{align}
Remarkably, we will show that the $f$-Ricci tensor of the metric $\hat{\gamma}$ satisfies,
\be\label{SVI}
\hat{Ric}_{f}=2(\nabla u\nabla u +e^{2u}\nabla v\nabla v)+\nabla f\nabla f
\ee
This, together with (\ref{SI})-(\ref{SII}), form a suitable system from which estimates can be obtained. Indeed, the structure of these equations is such that one can prove, by what is by now standard technique, the decay estimates,
\be\label{EST}
|\nabla u|^{2}+e^{2u}|\nabla v|^{2}\leq \frac{c}{\hat{s}^{2}},\quad |\nabla f|^{2}\leq \frac{c}{\hat{s}^{2}}
\ee
where $c>0$ is a constant independent on the solutions, and where $\hat{s}$ is the metric $\hat{\gamma}$-distance function to the boundary $\partial S=\{\rho_{0}\}\times \Sa_{z}$ of the $2$-manifold $S:=[\rho_{0},\infty)_{\rho}\times \Sa_{z}$,
\be
\hat{s}(p)={\rm dist}_{\hat{\gamma}}(p,\partial S)
\ee
The estimates (\ref{EST}) on $u$, $v$ and $\rho$ imply directly the following important estimate for original variables $\bar{u}$ and $\bar{v}$ that we will use in the proof of the main theorem,
\be\label{ESTORIGINAL}
|\nabla \bar{u}|^{2}+e^{-2\bar{u}}|\nabla \bar{v}|^{2}\leq \frac{2c}{\hat{s}^{2}}
\ee

Observe that the Gaussian curvature of $\hat{\gamma}$ is $\hat{\kappa}=2(|\nabla u|^{2}+e^{2u}|\nabla v|^{2})$ which is non-negative. Therefore the curvature is non-negative and has scale-invariant quadratic decay. With all this type of geometry at hand, a standard asymptotic analysis will permit us to study the geometry at infinity and therefore of the fields themselves. We do that in the next section.

\section{Proof of the decay estimates} \label{GE} 
From now one we will make $S:=[\rho_{0},\infty)_{\rho}\times \Sa_{z}$.
\begin{Proposition}\label{CRUCIAL}
Let $\gamma=e^{2k}(d\rho^{2}+dz^{2})$ and $\hat{\gamma}=e^{2\hat{k}}(d\rho^{2}+dz^{2})$ with $\hat{k}=\alpha k$, and $\alpha$ a constant. Then,
\be
(\hat{\kappa}\hat{\gamma}- \frac{\hat{\nabla}\hat{\nabla}\rho}{\rho})=\alpha(\kappa\gamma-\frac{\nabla\nabla \rho}{\rho}).
\ee
\end{Proposition}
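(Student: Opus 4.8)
The plan is to reduce both tensors on each side to the common flat background $\delta := d\rho^{2}+dz^{2}$ and observe that, once expressed as $(0,2)$-tensors in the fixed coordinate coframe $\{d\rho,dz\}$, both $\kappa\gamma$ and $\nabla\nabla\rho$ are \emph{linear} in the conformal factor $k$. Since $\hat{k}=\alpha k$, each of the two terms then simply picks up the scalar factor $\alpha$, and the identity follows termwise. The only subtlety to watch is that the exponential prefactors entering the conformal transformation laws must cancel, leaving a genuinely linear dependence on $k$; this is a pointwise algebraic statement, so no completeness or a priori estimate is needed.

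First I would treat the curvature term. For a surface metric $\gamma=e^{2k}\delta$ the Gaussian curvature is $\kappa=-e^{-2k}\Delta_{\delta}k$, where $\Delta_{\delta}=\partial_{\rho}^{2}+\partial_{z}^{2}$ is the flat Laplacian. Hence
\be
\kappa\gamma=-e^{-2k}(\Delta_{\delta}k)\,e^{2k}\delta=-(\Delta_{\delta}k)\,\delta,
\ee
so the $e^{2k}$ of $\gamma$ cancels the $e^{-2k}$ of $\kappa$ and the product is linear in $k$. Replacing $k$ by $\hat{k}=\alpha k$ gives $\hat{\kappa}\hat{\gamma}=-(\Delta_{\delta}\hat{k})\,\delta=-\alpha(\Delta_{\delta}k)\,\delta=\alpha\,\kappa\gamma$.

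Next I would treat the Hessian term using the conformal transformation of the Hessian, computed directly from the flat metric. With $\gamma=e^{2k}\delta$ and base metric $\delta$, for any function $\psi$ one has
\be
\nabla\nabla\psi=\nabla^{\delta}\nabla^{\delta}\psi-dk\otimes d\psi-d\psi\otimes dk+\langle\nabla k,\nabla\psi\rangle_{\delta}\,\delta.
\ee
Taking $\psi=\rho$ and using that $\nabla^{\delta}\nabla^{\delta}\rho=0$ and $\langle\nabla k,\nabla\rho\rangle_{\delta}=k_{\rho}$, the right-hand side contains no exponential factor, and one finds, in the coframe $\{d\rho,dz\}$,
\be
\nabla\nabla\rho=\begin{pmatrix}-k_{\rho}&-k_{z}\\ -k_{z}&k_{\rho}\end{pmatrix},
\ee
which is manifestly linear in $k$. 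The same computation with $k$ replaced by $\hat{k}=\alpha k$ yields $\hat{\nabla}\hat{\nabla}\rho=\alpha\,\nabla\nabla\rho$, and therefore $\hat{\nabla}\hat{\nabla}\rho/\rho=\alpha\,\nabla\nabla\rho/\rho$.

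Combining the two computations gives
\be
\hat{\kappa}\hat{\gamma}-\frac{\hat{\nabla}\hat{\nabla}\rho}{\rho}=\alpha\,\kappa\gamma-\alpha\,\frac{\nabla\nabla\rho}{\rho}=\alpha\Big(\kappa\gamma-\frac{\nabla\nabla\rho}{\rho}\Big),
\ee
which is the claim. The main (and essentially the only) obstacle is the bookkeeping that confirms the exponential cancellation in both terms: for the curvature the $e^{\pm2k}$ factors cancel against each other, while for the Hessian the cross terms $dk\otimes d\rho+d\rho\otimes dk$ together with the trace term $\langle\nabla k,\nabla\rho\rangle_{\delta}\,\delta$ conspire to leave a purely linear expression in $k$. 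Once this linearity is verified, the scaling $k\mapsto\alpha k$ makes the result immediate.
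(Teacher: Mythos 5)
Your proof is correct and follows essentially the same route as the paper: both arguments reduce to showing that $\kappa\gamma$ and $\nabla\nabla\rho$, written in the fixed coordinate basis $\{d\rho,dz\}$, are linear in the conformal exponent $k$, so that $k\mapsto\alpha k$ scales each term by $\alpha$. The only cosmetic difference is that you obtain the Hessian components from the general conformal transformation law while the paper reads them off the Christoffel symbols directly; the resulting matrix is identical.
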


\begin{proof} We will prove that the components of $\kappa\gamma- (\nabla\nabla\rho)/\rho$ in the base $(\partial_{\rho},\partial_{z})$ are linear in $k$. The proposition then follows after replacing $k$ by $\hat{k}=\alpha k$. By the standard formula, we have,
\be
\kappa \gamma = -(\partial_{\rho}^{2}k+\partial_{z}^{2}k)(d\rho^{2}+dz^{2}),
\ee
which is linear in $k$. We prove now that $(\nabla\nabla \rho)(\partial_{z},\partial_{z})$ and $(\nabla\nabla \rho)(\partial_{\rho},\partial_{z})$ are linear in $k$ too (note that as $\rho$ is harmonic, we have $(\nabla\nabla \rho)(\partial_{\rho},\partial_{\rho})=-(\nabla\nabla \rho)(\partial_{z},\partial_{z})$). Computing we have,
\be
(\nabla\nabla \rho)(\partial_{z},\partial_{z})=-\Gamma_{z z}^{\rho}=\partial_{\rho}k,
\ee
and,
\be
(\nabla\nabla \rho)(\partial_{\rho},\partial_{z})=-\Gamma_{\rho z}^{\rho}=-\partial_{z}k,
\ee
as wished.
\end{proof}
\begin{Corollary}\label{COR1} Under the notation of Proposition \ref{CRUCIAL}. If,
\be
\kappa \gamma - \frac{\nabla\nabla \rho}{\rho} = \frac{1}{2}(\nabla u\nabla u +e^{2u}\nabla v\nabla v),
\ee
then,
\be
\hat{\kappa}\hat{\gamma}-\frac{\hat{\nabla}\hat{\nabla}\rho}{\rho} = \frac{\alpha}{2}(\nabla u\nabla u +e^{2u}\nabla v\nabla v).
\ee
In particular we get that
\be\label{BERC}
\hat{Ric}_{f}:=\hat{\kappa}\hat{\gamma} - \hat{\nabla}\hat{\nabla} f=\frac{\alpha}{2}(\nabla u\nabla u +e^{2u}\nabla v\nabla v)+\nabla f\nabla f, 
\ee
is non-negative as long as $\alpha\geq 0$.
\end{Corollary}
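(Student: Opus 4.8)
The plan is to derive the first claimed identity as a one-line substitution into Proposition \ref{CRUCIAL}, and then to rewrite the term $\hat{\nabla}\hat{\nabla}\rho/\rho$ as the $f$-Hessian $\hat{\nabla}\hat{\nabla} f$, a step that produces the extra summand $\nabla f\nabla f$ essentially for free.

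First I would feed the hypothesis directly into the scaling law of Proposition \ref{CRUCIAL}. That proposition gives $\hat{\kappa}\hat{\gamma}-\hat{\nabla}\hat{\nabla}\rho/\rho=\alpha(\kappa\gamma-\nabla\nabla\rho/\rho)$, so multiplying the assumed expression for $\kappa\gamma-\nabla\nabla\rho/\rho$ by $\alpha$ yields at once
\[
\hat{\kappa}\hat{\gamma}-\frac{\hat{\nabla}\hat{\nabla}\rho}{\rho}=\frac{\alpha}{2}(\nabla u\nabla u+e^{2u}\nabla v\nabla v),
\]
which is the first displayed formula, with no further work.

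The only genuine computation is converting $\hat{\nabla}\hat{\nabla}\rho/\rho$ into $\hat{\nabla}\hat{\nabla} f$ via $f=\ln\rho$. Differentiating twice with respect to $\hat{\gamma}$ and using the product rule gives $\hat{\nabla}_i\hat{\nabla}_j\ln\rho=\rho^{-1}\hat{\nabla}_i\hat{\nabla}_j\rho-\rho^{-2}\partial_i\rho\,\partial_j\rho$, that is,
\[
\hat{\nabla}\hat{\nabla} f=\frac{\hat{\nabla}\hat{\nabla}\rho}{\rho}-\nabla f\nabla f .
\]
The key point to emphasize is that $\nabla f\nabla f=\rho^{-2}\,d\rho\otimes d\rho$ is the symmetric square of an exact one-form and is therefore metric-independent, so it is the same object whether formed with $\gamma$ or with $\hat{\gamma}$; this is what lets the correction term match across the two metrics. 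Substituting $\hat{\nabla}\hat{\nabla}\rho/\rho=\hat{\nabla}\hat{\nabla} f+\nabla f\nabla f$ into the first display and using the definition $\hat{Ric}_f=\hat{\kappa}\hat{\gamma}-\hat{\nabla}\hat{\nabla} f$ gives
\[
\hat{Ric}_f=\frac{\alpha}{2}(\nabla u\nabla u+e^{2u}\nabla v\nabla v)+\nabla f\nabla f ,
\]
as asserted. Non-negativity for $\alpha\geq 0$ is then immediate: each of $\nabla u\nabla u$, $\nabla v\nabla v$, $\nabla f\nabla f$ is a squared one-form and hence positive semidefinite, $e^{2u}>0$, and the coefficient $\alpha/2$ is non-negative.

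I do not expect a real obstacle, since Proposition \ref{CRUCIAL} already carries the essential linear-in-$k$ content. The one place demanding care is the Hessian identity: one must check that the metric-dependent parts of $\hat{\nabla}\hat{\nabla}\rho$ and $\hat{\nabla}\hat{\nabla} f$ differ by exactly $\nabla f\nabla f$, with the correct sign and coefficient, which is precisely what makes the $+\nabla f\nabla f$ appear cleanly and upgrades the plain geometric identity into the stated expression for the $f$-Ricci tensor.
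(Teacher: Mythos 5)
Your argument is correct and coincides with the paper's own (the corollary is stated without a separate proof precisely because it follows from Proposition \ref{CRUCIAL} by the substitution you perform, plus the Hessian identity $\hat{\nabla}\hat{\nabla}\ln\rho=\rho^{-1}\hat{\nabla}\hat{\nabla}\rho-\nabla f\nabla f$). Your observation that $\nabla f\nabla f=df\otimes df$ is metric-independent is exactly the remark the paper makes immediately after the corollary, so nothing is missing.
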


In the above formulae observe of course that $\nabla h = dh$ for any function, hence no need to place a hat on it. The following corollary will play a role in the proof of the main theorem.

\begin{Corollary}\label{COR2} For any $\alpha\geq 0$, the 3-metric $\hat{\gamma}+\rho^{2}d\varphi^{2}$ on $S\times \Sa_{\varphi}$ has non-negative Ricci curvature.
\end{Corollary}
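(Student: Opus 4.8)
The plan is to regard the $3$-metric $\hat{g}:=\hat{\gamma}+\rho^{2}d\varphi^{2}$ on $S\times\Sa_{\varphi}$ as a warped product over the $2$-dimensional base $(S,\hat{\gamma})$, with one-dimensional flat fibre $(\Sa_{\varphi},d\varphi^{2})$ and warping function $\rho$, and to read off its Ricci tensor block by block from the standard warped-product curvature formulas (O'Neill). For such a product the orthogonal splitting $T(S\times\Sa_{\varphi})=TS\oplus T\Sa_{\varphi}$ diagonalizes the Ricci tensor: the mixed horizontal/vertical components vanish identically. Hence it suffices to show that the two diagonal blocks are each non-negative, since then for any $W=X+V$ with $X$ horizontal and $V$ vertical one has $\mathrm{Ric}(\hat{g})(W,W)=\mathrm{Ric}(\hat{g})(X,X)+\mathrm{Ric}(\hat{g})(V,V)\geq 0$.

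For the horizontal block, the warped-product formula with one-dimensional fibre gives, for $X,Y$ tangent to $S$,
\[
\mathrm{Ric}(\hat{g})(X,Y)=\mathrm{Ric}_{\hat{\gamma}}(X,Y)-\frac{1}{\rho}\,\hat{\nabla}\hat{\nabla}\rho(X,Y).
\]
Since $\dim S=2$ we have $\mathrm{Ric}_{\hat{\gamma}}=\hat{\kappa}\hat{\gamma}$, so this block is precisely the tensor $\hat{\kappa}\hat{\gamma}-\hat{\nabla}\hat{\nabla}\rho/\rho$ evaluated in Corollary \ref{COR1}, equal to $\tfrac{\alpha}{2}(\nabla u\nabla u+e^{2u}\nabla v\nabla v)$. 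This is manifestly non-negative whenever $\alpha\geq 0$.

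For the vertical block, the same formula, now with $\dim F=1$ (so that the term proportional to $(\dim F-1)|\hat{\nabla}\rho|^{2}$ drops out and $\mathrm{Ric}_{F}=0$), reduces to
\[
\mathrm{Ric}(\hat{g})(\partial_{\varphi},\partial_{\varphi})=-\rho^{2}\,\frac{\hat{\Delta}\rho}{\rho}=-\rho\,\hat{\Delta}\rho,
\]
where $\hat{\Delta}$ denotes the Laplacian of $\hat{\gamma}$. The key observation — already exploited in Proposition \ref{CRUCIAL} — is that $\rho$ is harmonic: it is harmonic for the flat metric $d\rho^{2}+dz^{2}$, and in dimension $2$ the Laplacian on functions is conformally covariant, so $\hat{\Delta}\rho=e^{-2\hat{k}}\Delta\rho=0$. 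Thus the vertical block vanishes identically, and in particular it is non-negative, completing the proof.

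I expect the only genuine subtlety to lie in the vertical block: one must use the correct warped-product identity in the degenerate case $\dim F=1$, so that no $|\hat{\nabla}\rho|^{2}$-term survives and only $\hat{\Delta}\rho$ enters, and then recognize that this Laplacian vanishes by the harmonicity of $\rho$. The horizontal block requires essentially no work once $\hat{g}$ is recognized as a warped product, being exactly the non-negative tensor furnished by Corollary \ref{COR1}. Should one prefer to avoid citing the warped-product formulas, the three blocks can be obtained directly by computing the Christoffel symbols of $\hat{g}$ in the frame $(\partial_{\rho},\partial_{z},\partial_{\varphi})$, in the same spirit as the proof of Proposition \ref{CRUCIAL}; the computation is routine and gives the same result.
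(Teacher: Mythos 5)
Your proof is correct and follows essentially the same route as the paper: both decompose the Ricci tensor of the warped product into horizontal, mixed, and fibre blocks, identify the horizontal block with the non-negative tensor of Corollary \ref{COR1}, and kill the fibre component using the harmonicity of $\rho$. The only cosmetic difference is that you evaluate $\mathrm{Ric}(\partial_{\varphi},\partial_{\varphi})=-\rho\,\hat{\Delta}\rho$ directly from the one-dimensional-fibre O'Neill formula, whereas the paper extracts the same vanishing from the trace identity $2\hat{\kappa}=R-2\,\mathrm{Ric}(e_{3},e_{3})$ on the totally geodesic slices $S\times\{\varphi\}$.
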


\begin{proof} Consider any of the totally geodesic hypersurfaces $S\times \{\varphi\}$ and let $e_{1},e_{2}$ be an orthonormal base of tangent vectors at a point $p$ and $e_{3}$ a perpendicular unit-vector at the same point. 
By the Gauss-Codazzy we have,
\be\label{GC1}
Ric(e_{i},e_{3})=0,\ \forall i=1,2,
\ee
Also, a standard formula for warped product metrics gives,
\be\label{GC2}
Ric(e_{i},e_{j})=(\hat{\kappa}\hat{\gamma}-\frac{\nabla\nabla \rho}{\rho})(e_{i},e_{j}),\ \forall i,j=1,2.
\ee
Therefore $Ric(e_{i},e_{i})\geq 0$ by Corollary \ref{COR1}. On the other hand, we have,
\be\label{GC3}
2\hat{\kappa}=R-2Ric(e_{3},e_{3})=2\hat{\kappa}-Ric(e_{3},e_{3}),
\ee
where to obtain the second equality we used (\ref{GC2}) and the fact that $\rho$ is harmonic. We get thus $Ric(e_{3},e_{3})=0$. With these metric components, we have $Ric\geq 0$ as wished. 
\end{proof}
The following inequalities are central to obtaining the decay estimates.
\begin{Proposition}
Let $(\gamma; u,v)$ satisfy (\ref{SI})-(\ref{SII}) and (\ref{BER}), then,
\begin{align}
\label{CF1} & \Delta_{f} |\nabla f|^{2}\geq 2 |\nabla f|^{4},\\
\label{CF2} & \Delta_{f} \kappa \geq 4 \kappa^{2}.
\end{align}
\end{Proposition}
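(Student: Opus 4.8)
The plan is to derive both inequalities from weighted (Bakry-\'Emery) Bochner formulas, feeding in the expression (\ref{BER}) for $Ric_f$ together with the negative curvature of the hyperbolic target of the map $(u,v)$. Throughout I work on $(S,\gamma)$ and use two structural facts recorded earlier: since $\rho$ is harmonic for the flat metric it is harmonic for $\gamma$ (harmonicity is conformally invariant in dimension two), and tracing (\ref{SIII}) then gives the pointwise identity $\kappa=\tfrac14(|\nabla u|^2+e^{2u}|\nabla v|^2)$. I abbreviate $\mathcal{E}:=|\nabla u|^2+e^{2u}|\nabla v|^2$, so $\kappa=\tfrac14\mathcal{E}$, and I will repeatedly invoke the scalar Bakry-\'Emery Bochner identity $\tfrac12\Delta_f|\nabla\psi|^2=|\nabla\nabla\psi|^2+\langle\nabla\psi,\nabla(\Delta_f\psi)\rangle+Ric_f(\nabla\psi,\nabla\psi)$.

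For (\ref{CF1}) I apply this identity to $\psi=f$. From $\Delta_\gamma\rho=0$ and $f=\ln\rho$ one gets $\Delta f=-|\nabla f|^2$, hence $\Delta_f f=\Delta f+|\nabla f|^2=0$, so the middle term drops out. Feeding (\ref{BER}) into the curvature term gives $Ric_f(\nabla f,\nabla f)=\tfrac12(\langle\nabla u,\nabla f\rangle^2+e^{2u}\langle\nabla v,\nabla f\rangle^2)+|\nabla f|^4\geq|\nabla f|^4$, and discarding $|\nabla\nabla f|^2\geq0$ yields $\tfrac12\Delta_f|\nabla f|^2\geq|\nabla f|^4$, which is (\ref{CF1}).

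For (\ref{CF2}) I view $(u,v)$ as an $f$-harmonic map from $(S,\gamma)$ into the hyperbolic plane with metric $du^2+e^{2u}dv^2$ of curvature $-1$, whose $f$-energy density is $e(\phi)=\tfrac12\mathcal{E}=2\kappa$. The weighted Eells-Sampson formula reads $\Delta_f e(\phi)=|\nabla d\phi|^2+\langle Ric_f,h\rangle-\sum_{i,j}\langle R^N(d\phi\,e_i,d\phi\,e_j)d\phi\,e_j,d\phi\,e_i\rangle$, where $h:=\phi^{*}(du^2+e^{2u}dv^2)=\nabla u\,\nabla u+e^{2u}\nabla v\,\nabla v$ is the pullback tensor. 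Using (\ref{BER}), $\langle Ric_f,h\rangle=\tfrac12|h|^2+h(\nabla f,\nabla f)\geq\tfrac12|h|^2$, while the target curvature $-1$ turns the last term into $+|d\phi\wedge d\phi|^2\geq0$. In an orthonormal frame write $a=h_{11}$, $b=h_{22}$, $c=h_{12}$, so $\mathcal{E}=a+b$, $|h|^2=a^2+b^2+2c^2$ and $|d\phi\wedge d\phi|^2=2(ab-c^2)$; then $\Delta_f e(\phi)\geq\tfrac12(a^2+b^2+2c^2)+2(ab-c^2)=\tfrac12\mathcal{E}^2+(ab-c^2)$, and $ab-c^2\geq0$ by Cauchy-Schwarz. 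Hence $\Delta_f e(\phi)\geq\tfrac12\mathcal{E}^2$, i.e. $\Delta_f\mathcal{E}\geq\mathcal{E}^2$, and multiplying by $\tfrac14$ gives $\Delta_f\kappa\geq\tfrac14\mathcal{E}^2=4\kappa^2$.

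The one place demanding care is the bookkeeping of constants in (\ref{CF2}): the factor $4$ is sharp and emerges only from the exact cooperation of three contributions---the $\tfrac12|h|^2$ coming from (\ref{BER}), the full $|d\phi\wedge d\phi|^2$ supplied by the $-1$ target curvature, and the Cauchy-Schwarz slack $ab-c^2$---so I expect the main obstacle to be justifying the weighted Eells-Sampson identity in the precise normalization used (the $\Delta\mapsto\Delta_f$, $Ric\mapsto Ric_f$ substitution for $f$-harmonic maps, as in \cite{Chen:2012aa}) rather than the algebra itself. An alternative, fully elementary route that sidesteps the map Bochner formula is to apply the scalar identity to $u$ and to $v$ separately and recombine using (\ref{SI})-(\ref{SII}); this reproduces the same terms but is more laborious.
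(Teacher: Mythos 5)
Your proof is correct, and for (\ref{CF1}) it coincides with the paper's argument (weighted Bochner applied to $f$, using $\Delta_f f=0$ and $Ric_f(\nabla f,\nabla f)\geq|\nabla f|^4$). For (\ref{CF2}) both you and the paper invoke the Bochner formula for $f$-harmonic maps into $\mathbb{H}^2$, but you distribute the work among the terms differently, and your bookkeeping is the more reliable of the two. The paper discards the target-curvature term as merely non-negative and claims that the $Ric_f$ term alone is $\geq e^2/2$; however its displayed intermediate identity drops the factors of $\tfrac12$ coming from $Ric_f=\tfrac12(\nabla u\nabla u+e^{2u}\nabla v\nabla v)+\nabla f\nabla f$ and omits the square on the cross term, and with the correct coefficients one only gets $\langle Ric_f,h\rangle\geq\tfrac12|h|^2=\tfrac12\mathcal{E}^2-(ab-c^2)$, which by itself yields $\geq\tfrac14\mathcal{E}^2$, i.e. the weaker $\Delta_f\kappa\geq 2\kappa^2$. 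Your observation that the hyperbolic target contributes exactly $+2(ab-c^2)$, cancelling the Cauchy--Schwarz deficit and restoring $\tfrac12\mathcal{E}^2+(ab-c^2)\geq\tfrac12\mathcal{E}^2$, is precisely what is needed to obtain the stated constant $4$; so your version should be regarded as the correct derivation of (\ref{CF2}) as written. (For the paper's later purposes the exact constant is immaterial, so the weaker inequality would also suffice, but your argument is the one that actually proves the proposition as stated.)
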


\begin{proof} Recall the Bochner formula,
\be
\frac{1}{2}\Delta |\nabla \chi|^{2}=|\nabla\nabla \chi|^{2}+\langle \nabla \chi,\nabla(\Delta_{f}\chi)\rangle + Ric_{f}(\nabla \chi,\nabla \chi),
\ee
which is valid for any function $\chi$. Make now the choice $\chi=f$. Then, as $\rho$ is harmonic we have $\Delta_{f}f=0$ and recalling (\ref{BERC}) we have $Ric_{f}(\nabla f,\nabla f)\geq |\nabla f|^{4}$. Altogether we obtain (\ref{CF1}). 

We prove now (\ref{CF2}). Now recall the general Bochner formula for a $f$-harmonic map $h:S\rightarrow X$, \cite{Chen:2012aa},
\begin{align}\label{BfHM}
\frac{1}{2}\Delta_{f}e = & |\nabla d h|^{2}+\sum_{i}\langle dh(Ric_{f}(e_{i})),dh(e_{i})\rangle-\\
& \sum_{i,j} Rm^{X}\langle dh(e_{i}),dh(e_{j}),dh(e_{i}),dh(e_{j})\rangle,
\end{align}
into a target space $X$, (for the definition see for instance \cite{Chen:2012aa}), where $e$ is the energy of the map, $Rm^{X}$ is the Riemann curvature of $X$, and $\{e_{i}\}$ is an orthonormal basis of the tangent space of $S$ at the point where it is being evaluated. We will use this formula for the map $h=(u,v):S\rightarrow X:=\mathbb{H}^{2}$, from $S$ into the hyperbolic plane $\mathbb{H}^{2}$ with metric $du^{2}+e^{2u}dv^{2}$. The energy is $e=4\kappa$. This map is $f=\ln \rho$ harmonic due to equations (\ref{SI})-(\ref{SII}). As $\mathbb{H}^{2}$ has negative Riemann curvature, the last term on the r.h.s is non-negative. The first term in the r.h.s is non-negative too and the second is greater or equal than $e^{2}/2=8\kappa^{2}$ as the following computation shows.

Indeed, we have,
\be
dh(e_{i})=du(e_{i})\partial_{u}+dv(e_{i})\partial_{v},
\ee
and,
\be
Ric_{f}(e_{i})=\sum_{j}\frac{1}{2}(du(e_{i})du(e_{j})e_{j}+e^{2u}dv(e_{i})dv(e_{j})e_{j})+df(e_{i})df(e_{j})e_{j}.
\ee
Thus, using these equations in the second term in the r.h.s of the (\ref{BfHM}) we deduce,
\begin{align}
\sum_{i}\langle dh & (Ric_{f}(e_{i})), dh(e_{i})\rangle= \\
& = |\nabla u|^{4}+\langle \nabla u,\nabla v\rangle e^{2u}+|\nabla v|^{4}e^{4u}+
\langle \nabla f,\nabla u\rangle^{2}+\langle \nabla f,\nabla v\rangle^{2}e^{2u}\\
& \geq \frac{1}{2}(|\nabla u|^{2}+e^{2u}|\nabla v|^{2})^{2}=\frac{e^{2}}{2},
\end{align}
as wished. 
\end{proof}

Performing the same computation but with $(\hat{\gamma},u,v)$ we reach the following.

\begin{Corollary} For any $\alpha\geq 0$ we have,
\begin{align}
& \hat{\Delta}_{f} \hat{\kappa} \geq 4 \hat{\kappa}^{2},\\
& \hat{\Delta}_{f} |\nabla f|^{2}\geq 2 |\nabla f|^{4},
\end{align}
where the norm $|\nabla f|$  is with respect to $\hat{\gamma}$.
\end{Corollary}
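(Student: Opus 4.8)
The plan is to re-run, essentially verbatim, the two Bochner computations of the preceding Proposition, but now on the weighted surface $(S,\hat\gamma)$ with the same weight $f=\ln\rho$ in place of $(S,\gamma)$. Three structural facts make this repetition legitimate and, crucially, keep the numerical constants $4$ and $2$ unchanged. First, in dimension two the $f$-harmonic map system (\ref{SI})--(\ref{SII}) and the harmonicity of $\rho$ are conformally invariant and involve the very same weight $f$: since the weighted Laplacian rescales conformally, $\hat\Delta_f=e^{-2\hat k}\Delta_f$ (right-hand side in the flat metric $d\rho^2+dz^2$), the map $h=(u,v):(S,\hat\gamma)\to\mathbb{H}^2$ is still $f$-harmonic and one still has $\hat\Delta_f f=0$. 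Second, Corollary \ref{COR1}, equation (\ref{BERC}), provides the $f$-Ricci tensor of $\hat\gamma$ in exactly the algebraic form of (\ref{BER}), namely $\hat{Ric}_f=\tfrac{\alpha}{2}(\nabla u\nabla u+e^{2u}\nabla v\nabla v)+\nabla f\nabla f$, with $\tfrac12$ simply replaced by $\tfrac{\alpha}{2}$; the hypothesis $\alpha\geq 0$ guarantees the $(u,v)$-block is a non-negative tensor, so $\hat{Ric}_f\geq\nabla f\nabla f$. Third, tracing (\ref{BERC}) with respect to $\hat\gamma$ and using $\hat\Delta\rho=0$ gives $2\hat\kappa=\tfrac{\alpha}{2}\hat e$, where $\hat e=|\nabla u|^2_{\hat\gamma}+e^{2u}|\nabla v|^2_{\hat\gamma}$ is the energy density of $h$; thus the energy and the Gaussian curvature are proportional \emph{with the same factor} $\alpha$ that dresses $\hat{Ric}_f$. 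This homogeneity is the whole point: the factors of $\alpha$ will cancel at the end.

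For the second inequality I would apply the weighted Bochner identity to $\chi=f$. The term involving $\hat\Delta_f f$ vanishes, $|\hat\nabla\hat\nabla f|^2\geq 0$, and by the displayed form of the $f$-Ricci tensor one has $\hat{Ric}_f(\hat\nabla f,\hat\nabla f)=\tfrac{\alpha}{2}(\langle\nabla u,\hat\nabla f\rangle^2+e^{2u}\langle\nabla v,\hat\nabla f\rangle^2)+|\hat\nabla f|^4\geq|\hat\nabla f|^4$, the last step using only $\alpha\geq 0$. This yields $\hat\Delta_f|\nabla f|^2\geq 2|\nabla f|^4$, the constant $2$ being manifestly independent of $\alpha$.

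For the first inequality I would use the $f$-harmonic map Bochner formula (\ref{BfHM}) for $h=(u,v)$ into $\mathbb{H}^2$, whose energy is now $\hat e=\tfrac{4}{\alpha}\hat\kappa$. The intrinsic curvature term of $\mathbb{H}^2$ enters with the favourable sign (nonpositive target curvature), and $|\hat\nabla dh|^2\geq 0$; the Ricci term $\hat{Ric}_f(\hat\nabla u,\hat\nabla u)+e^{2u}\hat{Ric}_f(\hat\nabla v,\hat\nabla v)$ is estimated exactly as in the Proposition, the only change being the overall factor $\tfrac{\alpha}{2}$ inherited from $\hat{Ric}_f$, which produces the lower bound $\tfrac{\alpha}{2}\hat e^2$ (the $\nabla f\nabla f$ contribution only helps). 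Hence $\hat\Delta_f\hat e\geq\alpha\hat e^2$, and substituting $\hat e=\tfrac{4}{\alpha}\hat\kappa$ makes the powers of $\alpha$ cancel and leaves precisely $\hat\Delta_f\hat\kappa\geq 4\hat\kappa^2$.

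The one point that genuinely needs care — and the place where a careless argument would go wrong — is the bookkeeping of these $\alpha$-factors: one must verify that the proportionality constant between $\hat\kappa$ and $\hat e$ is the \emph{same} $\alpha$ appearing in $\hat{Ric}_f$, so that it cancels against the $\tfrac{\alpha}{2}$ generated by the Ricci term of (\ref{BfHM}). Once this homogeneity is checked the constants $4$ and $2$ emerge unchanged, and the degenerate case $\alpha=0$ (where $\hat\gamma$ is flat, $\hat\kappa\equiv 0$, and both inequalities hold trivially) poses no difficulty.
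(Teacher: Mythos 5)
Your proposal is correct and follows essentially the same route as the paper, whose proof of this corollary is literally the one-line instruction to repeat the preceding Bochner computations with $(\hat{\gamma},u,v)$ in place of $(\gamma,u,v)$. Your only addition is to make explicit the $\alpha$-bookkeeping (that $\hat{e}=\tfrac{4}{\alpha}\hat{\kappa}$ via tracing (\ref{BERC}) and $\hat{\Delta}\rho=0$, so the factors of $\alpha$ cancel and the constants $4$ and $2$ survive), which is exactly the point the paper leaves implicit.
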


The next proposition states finally the decay estimates.

\begin{Proposition}\label{ANDESTP} There is a universal constant $c>0$ such that, for all $\alpha\geq 0$,
\be\label{ANDEST}
\hat{\kappa}\leq \frac{c^{2}}{\hat{s}^{2}},\qquad \bigg|\frac{\nabla \rho}{\rho}\bigg|^{2}\leq \frac{c^{2}}{\hat{s}^{2}},
\ee
\end{Proposition}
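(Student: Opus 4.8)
The plan is to deduce both inequalities in (\ref{ANDEST}) from a single, purely local, a priori bound for nonnegative subsolutions of a reaction--diffusion inequality of the form $\hat{\Delta}_{f} w \geq \lambda w^{2}$ with $\lambda>0$. By the Corollary preceding this Proposition, the two quantities of interest, $w=\hat{\kappa}$ and $w=|\nabla f|^{2}=|\nabla \rho/\rho|^{2}$ (recall $f=\ln \rho$), satisfy exactly such an inequality, with $\lambda=4$ and $\lambda=2$ respectively. Hence it suffices to prove the universal local estimate $w(p)\leq C/(\lambda\, \hat{s}(p)^{2})$ at every interior point $p\in S$, with $C$ independent of the solution and of $\alpha\geq 0$; enlarging $C$ slightly to absorb both values $\lambda\in\{2,4\}$ then yields (\ref{ANDEST}) with a single constant $c$. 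Note that the second estimate in (\ref{ANDEST}) is literally the bound on $|\nabla f|^{2}$, so no separate argument is needed.

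The key structural input, which is what forces $C$ to be \emph{universal}, is a Laplacian comparison for the weighted distance. The point is that the $f$-Ricci tensor in (\ref{BERC}) satisfies $\hat{Ric}_{f}\geq \nabla f\nabla f$, since $\alpha\geq 0$ and the quadratic form $\nabla u\nabla u+e^{2u}\nabla v\nabla v$ is non-negative. Equivalently, on the $2$-manifold $S$ the finite-dimensional ($N=3$) Bakry-\'Emery tensor $\hat{Ric}_{f}-\nabla f\nabla f$ is non-negative. By the Bakry-\'Emery (Wei--Wylie) Laplacian comparison theorem, the weighted distance $r(\cdot)={\rm dist}_{\hat{\gamma}}(p,\cdot)$ from any interior point then obeys $\hat{\Delta}_{f} r\leq 2/r$, the constant $2=N-1$ depending only on the dimensional parameter $N=3$ and \emph{not} on the fields $u,v,k$. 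This is precisely the mechanism indicated earlier: the extra term $\nabla f\nabla f$ occurring in $\hat{Ric}_{f}$ upgrades the bare $\hat{Ric}_{f}\geq 0$ to a genuine finite-$N$ lower bound, which is what produces solution-independent constants.

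With the comparison in hand I would run the standard cutoff and maximum-principle argument. Fix an interior point $p$, set $R=\hat{s}(p)$, and note that by metric completeness of $\hat{\gamma}$ the geodesic ball $B_{R}(p)$ is precompact and does not meet $\partial S$, so $w$ is smooth on it and the subsolution inequality holds throughout. Choose a fixed profile $\psi$ with $\psi\equiv 1$ on $[0,1/2]$, $\psi\equiv 0$ near $1$ and $\psi'\leq 0$, and set $\phi=\psi(r/R)$, so that $|\nabla \phi|^{2}\leq C/R^{2}$ and, using $\hat{\Delta}_{f} r\leq 2/r$ together with $r\geq R/2$ on the support of $\psi'$, $\hat{\Delta}_{f}\phi\geq -C/R^{2}$. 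The function $Q=\phi^{2}w\geq 0$ vanishes on $\partial B_{R}(p)$ and therefore attains its maximum at an interior point $x_{0}$. At $x_{0}$ the first derivative test gives $\phi\nabla w=-2w\nabla\phi$, and feeding this together with $\hat{\Delta}_{f} w\geq \lambda w^{2}$ into $\hat{\Delta}_{f} Q\leq 0$ yields, after dividing by $w(x_{0})>0$,
\be
\lambda\,(\phi^{2}w)(x_{0})\leq 6|\nabla\phi|^{2}-2\phi\,\hat{\Delta}_{f}\phi\ \leq\ \frac{C}{R^{2}}.
\ee
Since $x_{0}$ maximizes $Q$ and $\phi(p)=1$, this gives $w(p)\leq Q(x_{0})\leq C/(\lambda R^{2})=C/(\lambda\,\hat{s}(p)^{2})$, as required.

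The main obstacle is the lack of smoothness of $r$ at the cut locus of $p$, where $\nabla\phi$ and $\hat{\Delta}_{f}\phi$ are a priori defined only weakly; if $x_{0}$ happens to lie on the cut locus, the naive derivative tests are unavailable. This is handled in the usual way by Calabi's trick: one replaces $r$ by the upper barrier obtained from the distance to a point pushed slightly along a minimizing geodesic from $p$ to $x_{0}$, for which the comparison $\hat{\Delta}_{f} r\leq 2/r$ holds in the support (barrier) sense and the maximum-principle computation above goes through verbatim. The remaining ingredients---namely that $\hat{\kappa}$ and $|\nabla f|^{2}$ really are subsolutions of the respective reaction--diffusion inequalities---are already supplied by the preceding Corollary, so no further Bochner or curvature computation is needed to complete the argument.
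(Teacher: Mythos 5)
Your proposal is correct, but I cannot compare it step-by-step with the paper's argument because the paper gives none: its entire proof is the citation ``follows from Lemma 3.2 in [Reiris2017]''. What you have written is a self-contained derivation of exactly the kind of estimate that lemma supplies --- a universal quadratic-decay bound for non-negative subsolutions of $\hat{\Delta}_{f}w\geq \lambda w^{2}$ on a surface whose $N=3$ Bakry--\'Emery tensor $\hat{Ric}_{f}-\nabla f\nabla f$ is non-negative --- and all the ingredients you invoke are genuinely available here: the Corollary preceding the Proposition gives the two subsolution inequalities with $\lambda=4$ and $\lambda=2$, and (\ref{BERC}) gives $\hat{Ric}_{f}\geq \nabla f\nabla f$ for every $\alpha\geq 0$, which is what makes the constant independent of the solution and of $\alpha$. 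One convention point worth flagging: the paper uses $\Delta_{f}\psi=\Delta\psi+\langle\nabla f,\nabla\psi\rangle$ and $Ric_{f}=Ric-\nabla\nabla f$, i.e.\ the opposite signs to the Wei--Wylie normalization, so one is really applying the comparison theorem with $-f$ in place of $f$; since the finite-$N$ correction term $\frac{1}{N-n}\,df\otimes df$ is invariant under $f\mapsto -f$, the conclusion $\hat{\Delta}_{f}r\leq 2/r$ is unaffected, but the identification should be stated. The remaining details (the cutoff $\phi=\psi(r/R)$, the first-derivative identity $\phi\nabla w=-2w\nabla\phi$ at the interior maximum of $\phi^{2}w$, the resulting bound $\lambda\phi^{2}w\leq 6|\nabla\phi|^{2}-2\phi\hat{\Delta}_{f}\phi$, and Calabi's barrier trick at the cut locus) are standard and correctly executed; the only trivial case to mention explicitly is $w\equiv 0$ on the ball, where the estimate is vacuous. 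In short, your argument is a legitimate --- and arguably preferable, because self-contained --- substitute for the paper's external citation.
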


\begin{proof} The proof follows from Lemma 3.2 in \cite{Reiris2017}.
\end{proof}

These point-wise estimates imply higher order estimates for the derivatives of $\hat{\kappa}$ that are useful in the asymptotic analysis as we will do later. The estimates are,
\begin{equation}\label{HOE}
|\nabla^{j} \hat{\kappa}|\leq \frac{c_{j}^{2}}{\hat{s}^{2+j}}
\end{equation}

To obtain them, note that because of the scale invariance of (\ref{HOE}) it is enough to prove them at points $p$ with $\hat{s}(p)=1$. But then note that $(u',v',\rho')=(u-u(p),e^{-u(p)}v,\rho/\rho(p))$ is a new solution to (\ref{SI})-(\ref{SII}) leaving (\ref{SVI}) invariant and having $(u'(p),v'(0),\rho'(0))=(0,0,1)$. From (\ref{ANDEST}) one obtains point-wise estimates for $(u',v',\rho')$ on the ball $B_{\hat{\gamma}}(p,1/2)$ and then higher-order estimates on the universal cover of the ball (where the injectivity radius at $p$ is controlled) by standard elliptic theory on (\ref{SI})-(\ref{SII}), which imply the higher-order estimates for $\hat{\kappa}$.  

\section{$\Sa_{z}\times \Sa_{\varphi}$-symmetric electrostatic charged data} \label{ESSS}
Denote $\partial = (\partial_{z},\partial_{\rho})$. In this section we study charged data $(\bar{u},\bar{v})$ for which $\bar{u}$ and $\partial \bar{v}$ are $z$-independent. This implies that not only the metric components are $z$-independent but also the electric field is. We are going to show that if one data extends to $\rho=+\infty$ then $\bar{u}$ tends to $-\infty$ and therefore $N=e^{\bar{u}}$ tends to $0$. 

We show first that $\partial_{z}\bar{v}$ must be constant. To see this we note that if $\partial_{\rho}\bar{v}$ is $z$-independent then $\partial_{z}\partial_{\rho} \bar{v} = 0$. Interchanging derivatives and integrating in $\rho$ between $\rho_{0}$ and $\rho$, leaving $z$ fixed, we obtain $\partial_{z}\bar{v}(z,\rho)=\partial_{z}\bar{v}(z,\rho_{0})$. But as $\partial_{z}\bar{v}$ is $z$-independent we conclude that $\partial_{z}\bar{v}$ is constant, as claimed.    

Now, if $\partial_{z}\bar{v}$ is constant, then (\ref{EII}) becomes $(\rho e^{-\bar{u}}\bar{v}')'=0$ where $'=\partial_{\rho}$, or,
\be\label{INTI}
\rho e^{-2\bar{u}}\bar{v}'=\bar{Q},
\ee
where $Q=2\pi\bar{Q}\neq 0$ is the electric charge. Use (\ref{INTI}) and $\partial_{z}\bar{v}=c$ into (\ref{EII}) to get,
\be\label{TOSOL}
\bar{u}''+\frac{1}{\rho}\bar{u}'=(c^{2}e^{-2\bar{u}}+\bar{Q}^{2}\frac{e^{2\bar{u}}}{\rho^{2}})
\ee
\begin{Proposition}\label{NSETI} 
If $c\neq 0$, then no solution to (\ref{TOSOL}) extends to $\rho=\infty$.
\end{Proposition}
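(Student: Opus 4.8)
The plan is to put (\ref{TOSOL}) into divergence form and exploit the resulting autonomous structure. Multiplying by $\rho$ turns the equation into $(\rho\bar u')'=c^{2}\rho\,e^{-2\bar u}+\bar Q^{2}\rho^{-1}e^{2\bar u}$, so that $w:=\rho\bar u'$ obeys $w'=c^{2}\rho\,e^{-2\bar u}+\bar Q^{2}\rho^{-1}e^{2\bar u}$. Since both $c\neq 0$ (the hypothesis) and $\bar Q\neq 0$ (the charge is nonzero), the right-hand side is strictly positive, and the arithmetic--geometric mean inequality bounds it below by the \emph{constant} $2|c\bar Q|>0$. Integrating $w'\geq 2|c\bar Q|$ from $\rho_{0}$ gives $w(\rho)\geq w(\rho_{0})+2|c\bar Q|(\rho-\rho_{0})$, so there is $\rho_{1}\geq\rho_{0}$ beyond which $w>0$; equivalently $\bar u'=w/\rho>0$ and $\bar u$ is strictly increasing on $[\rho_{1},\infty)$. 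This monotonicity is precisely where $c\neq 0$ enters: for $c=0$ the lower bound degenerates to $0$ and no sign control on $\bar u'$ follows, consistent with the Reissner--Nordstr\"om-type solutions (which do reach $\rho=\infty$) existing exactly in that case.

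Next I would extract a first integral. Consider $\Phi:=w^{2}-\bar Q^{2}e^{2\bar u}$. Using $\bar u'=w/\rho$ one computes $\Phi'=2w\,(w'-\bar Q^{2}\rho^{-1}e^{2\bar u})=2w\,c^{2}\rho\,e^{-2\bar u}$, which is nonnegative on $[\rho_{1},\infty)$ because $w>0$ there. Hence $\Phi$ is non-decreasing and $w^{2}\geq \bar Q^{2}e^{2\bar u}+E_{1}$ with $E_{1}:=\Phi(\rho_{1})$ a fixed constant. Once $\rho$ is large enough that $\bar Q^{2}e^{2\bar u}\geq 2|E_{1}|$ — which occurs because $\bar u'\geq|c\bar Q|$ forces $\bar u\to+\infty$ on any infinite interval of existence — this yields, using $w>0$, the pointwise bound $w\geq\tfrac12|\bar Q|\,e^{\bar u}$, that is $\bar u'=w/\rho\geq \tfrac{|\bar Q|}{2\rho}e^{\bar u}$.

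Finally I would convert this into finite-$\rho$ blow-up. Rewriting the last inequality as $-\tfrac{d}{d\rho}e^{-\bar u}=e^{-\bar u}\bar u'\geq \tfrac{|\bar Q|}{2\rho}$ and integrating from some $\rho_{2}$ gives $e^{-\bar u(\rho_{2})}\geq e^{-\bar u(\rho_{2})}-e^{-\bar u(\rho)}\geq \tfrac{|\bar Q|}{2}\ln(\rho/\rho_{2})$. The left side is a fixed finite constant while the right side diverges as $\rho\to\infty$; more precisely, the positive quantity $e^{-\bar u}$ would be forced to vanish at the finite value $\rho^{*}=\rho_{2}\exp\!\big(\tfrac{2}{|\bar Q|}e^{-\bar u(\rho_{2})}\big)$, so $\bar u\to+\infty$ before $\rho$ reaches $\rho^{*}$. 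Thus no solution of (\ref{TOSOL}) can be defined for arbitrarily large $\rho$, proving the proposition.

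I expect the crux to be the first step — securing the eventual strict positivity of $\bar u'$ from the coercive constant lower bound $2|c\bar Q|$. Everything downstream (the monotone first integral $\Phi$ and the blow-up comparison) relies on $w>0$, and this is exactly the point at which the hypothesis $c\neq 0$ is indispensable; the remaining estimates are a standard Riccati-type/finite-time blow-up comparison driven by the exponential term $\bar Q^{2}\rho^{-2}e^{2\bar u}$.
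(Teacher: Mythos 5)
Your proof is correct and follows essentially the same route as the paper's: a coercive constant lower bound $(\rho\bar u')'\ge 2|c\bar Q|$ (which the paper obtains equivalently by the substitution turning the right-hand side into $|\bar Qc|\cosh(2\hat u)/\rho\cdot\rho$) forces $\bar u'>0$ and $\bar u\to+\infty$, after which an energy/first-integral estimate yields a differential inequality of the form $\bar u'\gtrsim e^{\bar u}/\rho$ whose integration is incompatible with existence on all of $[\rho_0,\infty)$. The only cosmetic difference is that you stay in the original variable and use the exact first integral $\Phi=w^{2}-\bar Q^{2}e^{2\bar u}$, whereas the paper passes to $x=\ln\rho$ and integrates $\ddot{\hat u}\,\dot{\hat u}\ge|\bar Qc|\cosh(2\hat u)\,\dot{\hat u}$.
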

\begin{proof} Arguing by contradiction assume that $u(\rho)$ is a solution defined on $[\rho_{0},\infty)$. We are going to reach an impossibility. We begin performing the change of variables 
\be
\bar{u}=\hat{u}+\frac{1}{2}\ln \rho + \frac{1}{4}\ln \frac{c^{2}}{Q^{2}},
\ee
making equation (\ref{TOSOL}) into
\be\label{TOO}
\hat{u}''+\frac{1}{\rho}\hat{u}'=|\bar{Q}c|\frac{\cosh 2\hat{u}}{\rho}.
\ee
Then,
\be
(\rho\hat{u}')'=|\bar{Q}c|\cosh 2\hat{u}\geq |Qc|=:c_{1}>0.
\ee
We deduce therefore $\rho\hat{u}'\geq c_{1}\rho + c_{2}$. Thus $\rho\hat{u}'\geq c_{1}\rho/2$ if $\rho \geq \rho_{1}$ for certain $\rho_{1}$ and hence 
\be\label{UDERP}
\hat{u}'\geq \frac{c_{1}}{2}>0,
\ee
for all $\rho\geq \rho_{1}$. In particular $\hat{u}\uparrow +\infty$ as $\rho\rightarrow \infty$. Now assume $\rho\geq \rho_{2}=\max\{\rho_{1},1\}$. Then $1\geq 1/\rho$ and so we get from (\ref{TOO}),
\be
(\rho\hat{u}')'\geq |\bar{Q}c|\frac{\cosh 2\hat{u}}{\rho}.
\ee
Then, making the change of variables $x=\ln \rho$, the previous inequality reads
\be\label{BLOW}
\ddot{\hat{u}}\geq |\bar{Q}c|\cosh 2\hat{u}
\ee
where $\dot{}=d/dx$. We now work assuming $x\geq x_{2}=\ln \rho_{2}$. Multiplying (\ref{BLOW}) by $\dot{\hat{u}}$ which is positive by (\ref{UDERP}), integrating in $x$, and taking into account that $\hat{u}\rightarrow +\infty$ as $\rho\rightarrow \infty$ we deduce after some manipulations
\be 
\dot{\hat{u}}\geq c_{3}e^{2\hat{u}},
\ee
for some $c_{3}>0$ and for $x\geq x_{3}$ for some $x_{3}\geq x_{2}$. Multiplying by $e^{-2\hat{u}}$ and integrating in $x$ we deduce
\be
-e^{-\hat{u}}\geq c_{3}(x-x_{3})+e^{2\hat{u}(x_{3})}.
\ee
But as $x\rightarrow +\infty$ the left hand is negative whereas the right-hand side tends to $+\infty$. We conclude that the data cannot extend to infinity.  
\end{proof}

We now assume that $\bar{u}$ is a solution to (\ref{TOSOL}) defined on $[\rho_{0},\infty)$. We will show that $N=e^{\bar{u}}\rightarrow 0$ as $\rho\rightarrow \infty$. By the previous proposition we have $c=0$ and thus $\bar{u}$ satisfies
\be\label{TOSOLSIM}
\bar{u}''+\frac{1}{\rho}\bar{u}'=\bar{Q}^{2}\frac{e^{2\bar{u}}}{\rho^{2}}.
\ee
Then, a simple manipulation after multiplication by $\rho^{2}\bar{u}'$ gives
\be
\big((\rho \bar{u}')^{2}\big)'=\big(\bar{Q}^{2}e^{2\bar{u}}\big)',
\ee
from which we deduce
\be\label{ECE}
(\rho\bar{u}')^{2}=\bar{Q}^{2}e^{2\bar{u}}\pm a^{2},
\ee
with $a\geq 0$ a constant. Letting $\chi=e^{-\bar{u}}>0$ and defining the new variable $x= \ln \rho$ we get easily (we use $\partial_{x}\chi=\dot{\chi}$)
\be
\dot{\chi}^{2} = \bar{Q}^{2} \pm a^{2}\chi^{2},
\ee
to obtain, after $x$-derivation and cancelation of $\dot{\chi}$ on both sides
\be\label{HOLA}
\ddot{\chi}=\pm a^{2}\chi,
\ee
($\dot{\chi}=0$ or $\chi=\pm \bar{Q}/a$ is not a solution of the original problem (\ref{TOSOLSIM}) and was introduced during the deduction of (\ref{ECE})).
As a consequence we obtain the following type of solutions for $\chi$ (we take only those that are positive because $\chi>0$):
\begin{enumerate}
\item $a=0$,
\be
\chi = b+|\bar{Q}|\ln x,
\ee
\item $a>0$, case $+a^{2}$ in (\ref{HOLA}),
\be
\chi = \frac{|\bar{Q}|}{a}\sinh (\pm a x +b).
\ee
\item $a>0$, case $-a^{2}$ in (\ref{HOLA}),
\be
\chi = \frac{|\bar{Q}|}{a}\sin (\pm a x +b).
\ee
\end{enumerate}
The three expressions have to be taken only in the $x$-domain where they are positive. The only solutions extending to $x=+\infty$ tend to $+\infty$ as $x\rightarrow +\infty$. Therefore $N=\chi^{-1}$ tends to $0$ as $x\rightarrow \infty$.

\section{The proof of the main Theorem} \label{PMT}

From now on we work on $S=[\rho_{0},\infty)_{\rho}\times \Sa_{z}$ with the metric $\hat{\gamma}$ and $\alpha=4$. All estimates will be done in this Riemannian manifold. As earlier we let $\hat{s}(p)=\dist_{\hat{\gamma}}(p,\partial S)$.

We know by (\ref{ANDEST}) that $|\nabla \ln \rho|\leq c/\hat{s}$. We move first to prove that there is $b>0$ and a divergent sequence $p_{i}$ such that the opposite inequality holds, 
\be\label{GTP}
\bigg|\frac{\nabla \rho}{\rho}\bigg|\bigg|_{p_{i}}\geq \frac{b}{\hat{s}(p_{i})}.
\ee
This will turn out to be a fundamental estimate for all that follows. Let us state this in the following proposition.

\begin{Proposition}\label{SEXI} There is $b>0$ and a divergent sequence $p_{i}\in S$ such that, 
\be\label{GTP}
\bigg|\frac{\nabla \rho}{\rho}\bigg|\bigg|_{p_{i}}\geq \frac{b}{\hat{s}(p_{i})}.
\ee
\end{Proposition}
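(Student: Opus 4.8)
The plan is to argue by contradiction. By the upper estimate (\ref{ANDEST}) we already know $\hat{s}\,|\nabla\rho/\rho|\le c$ everywhere, so the proposition is exactly the assertion that this scale-invariant quantity does not decay to $0$ along \emph{every} divergent path. Thus I would assume the opposite, namely that
\[
\hat{s}(p)\,\bigl|\nabla\rho/\rho\bigr|(p)\longrightarrow 0\qquad\text{as }p\to\infty\ (\text{i.e. }\hat{s}(p)\to\infty),
\]
and derive a contradiction. Writing $f=\ln\rho$, the hypothesis reads $|\nabla f|=o(1/\hat{s})$.

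First I would isolate the rigid piece of information that the contradiction hypothesis must violate: the conserved flux of the harmonic function $\rho$. Since harmonicity in dimension two is conformally invariant, $\rho$ is $\hat{\gamma}$-harmonic, proper, and has compact level circles $\{\rho=r\}\cong\Sa_{z}$. The divergence theorem applied on the collar $\{\rho_{0}\le\rho\le r\}$ then gives that
\[
\Phi:=\int_{\{\rho=r\}}\bigl|\nabla\rho\bigr|_{\hat{\gamma}}\,d\ell
\]
is a \emph{positive constant independent of} $r$, equivalently $\int_{\{\rho=r\}}|\nabla f|\,d\ell=\Phi/r$. This fixed flux is what forbids the gradient from being uniformly small: if $|\nabla f|\le\varepsilon/\hat{s}$ on $\{\rho=r\}$, then $\Phi/r\le (\varepsilon/d(r))L(r)$, where $d(r)=\min_{\{\rho=r\}}\hat{s}$ and $L(r)$ is the $\hat{\gamma}$-length of the level circle; hence the level circles would have to lengthen at a definite rate, $L(r)\gtrsim\Phi\,d(r)/(\varepsilon r)$, with $\varepsilon\to0$.

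The geometric input that turns this into a contradiction is the non-negativity of the curvature of $\hat{\gamma}$ together with its scale-invariant quadratic decay $\hat{\kappa}\le c^{2}/\hat{s}^{2}$ and the higher-order estimates (\ref{HOE}), and --- crucially --- the non-negativity of the Ricci curvature of the warped $3$-metric $\hat{\gamma}+\rho^{2}d\varphi^{2}$. Non-negative curvature with quadratic curvature decay forces the end of $(S,\hat{\gamma})$ to be asymptotically modeled on a flat cone or cylinder, on which the natural radial harmonic function has $\hat{s}\,|\nabla f|$ bounded below by a positive constant, so the contradiction hypothesis is incompatible with this asymptotic model. Concretely, I would either (a) pass to a blow-down/tangent-cone limit of $(S,\hat{\gamma},p_{i})$ along a divergent sequence, using (\ref{EST})--(\ref{HOE}) to carry $\rho$ to the limit and the flux $\Phi>0$ to guarantee the limiting harmonic function is non-constant, and read off the lower bound on the limit; or (b) feed the forced growth $L(r)\gtrsim d(r)/\varepsilon$ into the coarea formula together with a Bishop-type area/volume comparison (via the non-negative Ricci of $\hat{\gamma}+\rho^{2}d\varphi^{2}$) to contradict the admissible at-most-quadratic area growth.

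The main obstacle is exactly the step where non-negative curvature must be used to rule out a \emph{thin spike}: a priori the collar $\{\rho_{0}\le\rho\le r\}$ could be long and narrow, with $|\nabla f|$ decaying faster than $1/\hat{s}$ while the minimal depth $d(r)$, the maximal depth, the level-length $L(r)$, and the enclosed area all remain wildly mismatched. Controlling these quantities uniformly in $r$ --- so that the forced lengthening of the level circles genuinely overshoots the Bishop area bound --- is the delicate point, and it is precisely where the curvature hypotheses (rather than the flux identity alone) are indispensable. I expect the cleanest route to be the compactness/tangent-cone argument, since there the flux immediately prevents the limit from degenerating and the homogeneity of the limit model makes the lower bound on $\hat{s}\,|\nabla f|$ transparent.
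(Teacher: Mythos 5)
Your framing is right (the claim is that the scale-invariant quantity $\hat{s}\,|\nabla\rho|/\rho$ cannot decay to zero along every divergent sequence), and you correctly name the non-negative Ricci curvature of $\hat{\gamma}+\rho^{2}d\varphi^{2}$ as an essential input, but neither of your two concrete routes closes the argument. The conserved flux is too weak an invariant: since $\int_{\{\rho=r\}}|\nabla\rho|\,d\ell$ is a positive constant, the flux of $f=\ln\rho$ through the level circle is ${\rm const}/r\rightarrow 0$, so in any blow-down along $p_{i}$ the normalized harmonic functions have both gradient and flux tending to zero under your contradiction hypothesis, and the limit is simply constant --- no contradiction. (In the paper the non-degeneracy of the blow-down of $\rho$, used in Proposition \ref{SQVG} and in the main proof, is a \emph{consequence} of Proposition \ref{SEXI}, not a route to it.) Route (b) also fails quantitatively: $|\nabla f|\leq\varepsilon/\hat{s}$ integrates to $\hat{s}\gtrsim\rho^{1/\varepsilon}$, so the forced lengthening $L(r)\gtrsim d(r)/(\varepsilon r)$ reads $L\gtrsim \hat{s}^{1-\varepsilon}/\varepsilon$ in terms of distance, which \emph{undershoots} the quadratic area bound rather than overshooting it. Finally, the asserted model rigidity is false as stated: on a flat cone the positive proper harmonic function $\rho=\ln r$ has constant positive flux, compact level sets, and $\hat{s}\,|\nabla\rho|/\rho\sim 1/\ln r\rightarrow 0$.

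What the paper actually uses, and what is absent from your proposal, is the monotonicity formula of \cite{PartII} (Proposition 4.2.14 there) for the \emph{energy} flux $G(x)=\int_{U^{-1}_{x}}|\nabla U|^{2}dA$ of the harmonic function $U=\ln\rho$ on the $3$-manifold $S\times\Sa_{\varphi}$ with metric $\hat{\gamma}+\rho^{2}d\varphi^{2}$: non-negative Ricci curvature together with the toroidal topology of the level sets gives $(G'/G)'\geq 0$, hence $G(x)\geq {\rm const}\cdot e^{Cx}={\rm const}\cdot\rho^{C}$ for a fixed exponent $C$, which is shown to be negative using (\ref{ANDEST}). This definite \emph{polynomial} lower bound on the decay of the energy flux is the rigid piece of information; choosing $\varepsilon=-1/(2C)$ in the contradiction hypothesis then yields $\rho^{C}\lesssim\varepsilon/\hat{s}\lesssim\varepsilon\,\rho^{-1/\varepsilon}$, i.e.\ $\rho^{-C}\lesssim{\rm const}$, which fails as $\rho\rightarrow\infty$. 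Note that this is exactly where the warping by $\rho$ enters effectively --- the cone example above ceases to have non-negative Ricci once warped by $\rho^{2}d\varphi^{2}$, because $\nabla\nabla\rho/\rho<0$ there --- whereas your proposal names this hypothesis but never deploys it.
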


\begin{proof} The idea is to use a technique developed in \cite{PartII} for harmonic functions on 3-manifolds of non-negative Ricci curvature whose level sets are surfaces of genus greater than zero. In this instance the harmonic function will be $U=\ln \rho$ on the 3-manifold $S\times \Sa_{\varphi}$ endowed with the metric $\hat{h}=\hat{\gamma}+\rho^{2}d\varphi^{2}$ that we know has non-negative Ricci curvature. Of course, the level sets of $U$ are $2$-tori. 

Recall that $S=[\rho_{0},\infty)\times \Sa_{z}$ and that $\hat{\gamma}=e^{8k}(d\rho^{2}+dz^{2})$. Then, we can write the $\hat{\gamma}$-length-element of any circle $\{\rho\}\times \Sa_{z}$ as $d\ell = e^{4k}dz$. Therefore, taking into account that $|\nabla \rho|=e^{-4k}$ we deduce that $dz = |\nabla \rho|d\ell$, is constant. We will use this below.

Let $U_{x}^{-1}$ denote the level set of $U$: $U^{-1}_{x}=\{(p,\varphi)\in S\times \Sa_{\varphi}: U(p)=x\}$. We consider the function of $x$,
\be
G(x) = \int_{U^{-1}_{x}}|\nabla U|^{2}dA,
\ee
where $dA=\rho d\ell d\varphi$ is the area element of the level set $U^{-1}_{x}$. As $dz = |\nabla \rho|d\ell$, then we can write,
\be
G(x)=\int_{U^{-1}_{x}}|\nabla U|^{2}dA=2\pi\int_{\{\rho=e^{x}\}\times \Sa_{z}} |\nabla U| d z.
\ee
It turns out, see Proposition 4.2.14 in \cite{PartII}, that $(G'/G)'\geq 0$ from which it is deduced that $G'/G\geq (G'/G)(x_{0})=:C$. Integrating we reach,
\be\label{USI}
G(x)\geq \frac{G(x_{0})}{e^{Cx_{0}}}e^{Cx}. 
\ee
If $C>0$ then $G(x)\geq G(x_{0})>0$ for all $x\geq x_{0}$. But, as $dz$ is constant, then, for every $x$ there is a point $p\in \{\rho=e^{x}\}\times \Sa_{z}$ such that $G(x)=(2\pi)^{2}(|\nabla \rho|/\rho)(p)\geq G(x_{0})>0$. Therefore $|\nabla \rho|/\rho$ does not go to zero at infinity which is impossible. Thus $C<0$. Then, using (\ref{USI}) we have,
\be
\int_{\{\rho=e^{x}\}\times \Sa_{z}} \rho^{-C}\bigg|\frac{\nabla \rho}{\rho}\bigg|dz\geq d_{1},
\ee
for some $d_{1}>0$. Therefore, for every $x$ there is $p_{x}\in \{\rho=e^{x}\}\times \Sa_{z}$ such that,
\be\label{UTIL}
\big(\rho^{-C}\bigg|\frac{\nabla \rho}{\rho}\bigg|\big)\bigg|_{p_{x}}\geq d_{2},
\ee
for $d_{2}=d_{1}/2\pi>0$. Now, assume that for every $\epsilon=-1/(2C)$ there is $\hat{s}_{\epsilon}$ such that,
\be
\bigg|\frac{\nabla \rho}{\rho}\bigg|\bigg|_{q}\leq \frac{\epsilon}{\hat{s}(q)}
\ee
for all $q$ such that $\hat{s}(q)\geq \hat{s}_{\epsilon}$. Then, a simple integration gives $\rho(q) \leq d_{3}\hat{s}^{\epsilon}(q)$ and for some $d_{3}>0$. Then,
\be
d_{4}\hat{s}(q)^{-C\epsilon-1}\geq \big(\rho^{-C}\bigg|\frac{\nabla \rho}{\rho}\bigg|\big)\bigg|_{q}
\ee
for some $d_{4}$ and for all $q$ such that $\hat{s}(q)\geq \hat{s}_{\epsilon}(q)$. If $q=p_{x}$ then using (\ref{UTIL}) we get,
\be
\hat{s}(p_{x})^{-1/2}\geq d_{5}>0.
\ee
for some $d_{5}>0$. But then the left hand side tends to zero as $x\rightarrow \infty$ which is impossible. We conclude then that there is a divergent sequence of points $p_{i}$ such that,
\be
\bigg|\frac{\nabla \rho}{\rho}\bigg|\bigg|_{p_{i}}\geq \frac{\epsilon}{\hat{s}(p_{i})},
\ee
where $\epsilon = \frac{-1}{2C}$. It follows (\ref{GTP}).
\end{proof}

Let us explore now some consequences of what we just proved. To start lets prove that the area growth on $(S;\hat{\gamma})$ is sub-quadratic. Let $B(r)=\{p\in S: \dist_{\hat{\gamma}}(p,\partial S)< r\}$ and let $A(B(r))$ be the $\hat{\gamma}$-area of $B(r)$. As $\hat{\kappa}\geq 0$, then the Bishop-Gromov monotonicity implies that the quotient $A(B(r))/r^{2}$ is monotonically non-increasing as $r$ increases. Thus $A(B(r))/r^{2}\rightarrow \mu\geq 0$ as $r\rightarrow \infty$. We prove now that indeed $\mu=0$, namely, that the area growth is sub-quadratic. 
\begin{Proposition}\label{SQVG} The manifold $(S;\hat{\gamma})$ has sub-quadratic area growth, that is $\mu=0$.
\end{Proposition}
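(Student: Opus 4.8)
The plan is to argue by contradiction: assume $\mu>0$ and show that this is incompatible with the lower gradient bound produced in Proposition \ref{SEXI}. The mechanism is that a positive asymptotic area ratio forces the end of $(S,\hat{\gamma})$ to open up into a genuine two-dimensional metric cone at infinity, on which the harmonic coordinate $\rho$ can only grow logarithmically --- so slowly that $|\nabla\rho/\rho|$ decays strictly faster than $1/\hat{s}$, in direct conflict with (\ref{GTP}).

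First I would record the one structural identity that drives everything. Since $\hat{\gamma}$ is conformal to the flat metric $d\rho^{2}+dz^{2}$ and $\rho$ is flat-harmonic, $\rho$ is $\hat{\gamma}$-harmonic on $S$; moreover, as already observed in the proof of Proposition \ref{SEXI}, along each level circle $\{\rho=s\}$ one has $dz=|\nabla\rho|\,d\ell$, so the flux is constant,
\be
\int_{\{\rho=s\}}|\nabla\rho|\,d\ell=\int_{\Sa_{z}}dz=:L,\qquad\text{independent of }s.
\ee
By the co-area formula this already yields the finite Dirichlet energy
\be
\int_{S}|\nabla\ln\rho|^{2}\,dA=\int_{\rho_{0}}^{\infty}\frac{1}{s^{2}}\Big(\int_{\{\rho=s\}}|\nabla\rho|\,d\ell\Big)\,ds=\int_{\rho_{0}}^{\infty}\frac{L}{s^{2}}\,ds=\frac{L}{\rho_{0}}<\infty,
\ee
which is a useful sanity check but, by itself, does not exclude $\mu>0$ (a gradient decaying like $1/(\hat{s}\ln\hat{s})$ would already give finite energy). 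So the lower bound of Proposition \ref{SEXI} is genuinely needed.

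Second, assuming $\mu>0$, I would invoke the asymptotic analysis made available by $\hat{\kappa}\geq 0$, the quadratic decay $\hat{\kappa}\leq c^{2}/\hat{s}^{2}$ from (\ref{ANDEST}), and the higher-order estimates (\ref{HOE}): the rescaled surfaces $(S,R^{-2}\hat{\gamma})$ subconverge, as $R\to\infty$, in $C^{\infty}_{loc}$ away from the vertex, to a flat metric cone $C$ over a circle of total angle $2\mu$ (the hypothesis $\mu>0$ is precisely what prevents collapse to a ray). The level circles $\{\rho=s\}$ generate $H_{1}$ of the cylindrical end, hence converge to the cross-sectional circles of $C$; since $\rho$ is monotone toward infinity and harmonic with scale-invariant flux $L$, its blow-down is the unique radial harmonic function on $C$ carrying that flux, namely $\tfrac{L}{2\mu}\ln r+\mathrm{const}$, where $r$ denotes the distance to the vertex. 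Translating back, in the conical region one gets $\rho\sim\tfrac{L}{2\mu}\ln\hat{s}$ and $|\nabla\rho|\sim\tfrac{L}{2\mu}\,\hat{s}^{-1}$, whence
\be
\bigg|\frac{\nabla\rho}{\rho}\bigg|\sim\frac{1}{\hat{s}\,\ln\hat{s}}=o\Big(\frac{1}{\hat{s}}\Big)
\ee
uniformly as $\hat{s}\to\infty$.

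This is the contradiction: Proposition \ref{SEXI} furnishes a divergent sequence $p_{i}$ with $|\nabla\rho/\rho|(p_{i})\geq b/\hat{s}(p_{i})$ for a fixed $b>0$, whereas the previous display forces $|\nabla\rho/\rho|(p_{i})\,\hat{s}(p_{i})\to 0$. Hence $\mu=0$. I expect the main obstacle to be making the blow-down rigorous: one must verify that $\mu>0$ produces a two-dimensional (non-collapsed) tangent cone and, crucially, that the \emph{un-normalized} harmonic function $\rho$ really does blow down to the logarithmic model with the correct coefficient. Uniqueness of the tangent cone is not required --- it suffices that every blow-down limit is a cone of the same angle $2\mu$ and that the $C^{\infty}_{loc}$ convergence supplied by (\ref{HOE}) upgrades the pointwise values of $|\nabla\rho/\rho|\,\hat{s}$ to their conical limit along the sequence $p_{i}$.
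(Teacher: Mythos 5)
Your overall strategy --- argue by contradiction, extract a flat tangent cone at infinity from $\mu>0$ together with $\hat\kappa\ge 0$ and the scale-invariant estimates (\ref{ANDEST}), (\ref{HOE}), and then play the behavior of the harmonic function $\rho$ on that cone against Proposition \ref{SEXI} --- is the same as the paper's, and the flux identity $\int_{\{\rho=s\}}|\nabla\rho|\,d\ell=L$ you record is correct. But the central analytic step has a genuine gap. You assert that the blow-down of the \emph{un-normalized} $\rho$ is the radial harmonic function $\tfrac{L}{2\mu}\ln r+{\rm const}$ and deduce the uniform decay $|\nabla\rho/\rho|=o(1/\hat s)$. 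Neither claim follows from the estimates available. First, the un-normalized $\rho$ has no blow-down limit: the only scale-invariant control is on $\nabla\ln\rho$, so the object that converges on the rescaled annuli is $\hat\rho_i=\rho/\rho(p_i)$; and since the flux is the fixed number $L$ while $\rho(p_i)\to\infty$, the limit $\hat\rho_\infty$ has \emph{zero} flux, not flux $L$ --- so it cannot be the logarithmic model. Second, even for the correctly normalized limit, the flux identity only controls the \emph{average} of $|\nabla\rho|$ over a level circle; a zero-flux harmonic function on the cone can perfectly well be non-constant and non-radial (it may contain the separated modes $r^{\pm k\pi/\mu}\cos(k\pi\phi/\mu)$), so neither radiality of the limit nor the pointwise statement $|\nabla\rho/\rho|\,\hat s\to 0$ at the specific points $p_i$ follows from the flux plus $C^{\infty}$ convergence. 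Some further structural input is required to pin the limit down, and your write-up does not supply it.

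The paper supplies exactly that input, and this is where the two arguments genuinely diverge: it uses the identity (\ref{SVI}) (equivalently Corollary \ref{COR1}), which passes to the limit and, since $\hat\kappa_\infty=0$, forces the Hessian of $\hat\rho_\infty$ to be negative semi-definite; harmonicity then gives $\hat\nabla\hat\nabla\hat\rho_\infty=0$, so $\hat\rho_\infty$ is affine along every complete non-radial geodesic of the cone. Proposition \ref{SEXI} enters only to guarantee that $\hat\rho_\infty$ is non-constant, and the contradiction is then with the \emph{positivity} of $\rho$ (a non-constant affine function along a complete geodesic must become negative), not with the gradient lower bound. To repair your route you would need to replace the flux heuristic by an argument of this kind, or by a Liouville-type classification of non-negative harmonic functions on the punctured cone; as written, the step from the flux identity to $|\nabla\rho/\rho|\sim 1/(\hat s\ln\hat s)$ is unsupported.
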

\begin{proof}
Assume that $\mu>0$. Then by standard arguments using Bishop-Gromov's monotonicity and the scale invariant bound (\ref{HOE}) on the Gaussian curvature $\hat{\kappa}$ and its derivatives, imply that the manifold $(S;\hat{\gamma})$ has a flat cone at infinity. Let us recall this. Consider any divergent sequence $p_{i}$ and let $r_{i}=\hat{s}(p_{i})$. Given $0<c<d$ let $\mathcal{A}_{r_{i}}(c,d)=\{p:c\leq \hat{s}_{r_{i}}(p)\leq d\}$. Then, there is a divergent sequence $k_{i}\rightarrow \infty$ such that the pointed annuli $(\mathcal{A}_{r_{i}}(2^{-k_{i}},2^{k_{i}}),p_{i})$ endowed with the scaled metric $\hat{\gamma}_{r_{i}}=\hat{\gamma}/r_{i}^{2}$ converge in $C^{\infty}$ and in the pointed sense (see \cite{MR2243772}) to the flat cone $((0,\infty)_{r}\times \Sa_{\phi};\hat{\gamma}_{\infty}=dr^{2}+(\mu^{2}r^{2}/\pi^{2})d\phi^{2},p_{\infty}=(r,0)$). Choose now $p_{i}$ to be the sequence in Proposition \ref{SEXI}. Define the harmonic function $\hat{\rho}_{i}=\rho/\rho(p_{i})$. Then, as $\hat{\rho}(p_{i})=1$ and $|\nabla \hat{\rho}_{i}|_{r_{i}}(p_{i})\geq b$ we can use the scale invariant bound (\ref{ANDEST}) and standard elliptic estimates to conclude that $\hat{\rho}_{i}$ sub-converges to a limit non-negative and non-constant function $\hat{\rho}_{\infty}$ on the limit cone. But as $\hat{\kappa}_{\infty}=0$ we conclude from (\ref{SVI}), (that passes to the limit), that $\hat{\nabla}\hat{\nabla} \hat{\rho}_{\infty}=0$. Take then any infinite geodesic $\xi(t)$ on the limit cone such that $\xi(0)=p_{\infty}$ and such that,
\be\label{VN}
\frac{d}{dt}(\hat{\rho}_{\infty}(\xi(t)))\big|_{t=0}=\langle \hat{\nabla} \hat{\rho}_{\infty},\xi'(0)\rangle<0,
\ee
(to see that $\xi(t)$ always exists note that any inextensible geodesic except the radial ones diverge to infinity in both directions). But then, as $\hat{\nabla}\hat{\nabla} \hat{\rho}_{\infty}=0$ we deduce that $d^{2} \hat{\rho}_{\infty}(\xi(t))/dt^{2}=0$ and therefore that $\hat{\rho}(\xi(t))=(\frac{d}{dt}(\hat{\rho}_{\infty}(\xi(t)))\big|_{t=0})t+\hat{\rho}(p_{\infty})$. But because $\frac{d}{dt}(\hat{\rho}_{\infty}(\xi(t)))\big|_{t=0}<0$ we have $\hat{\rho}_{\infty}(t)<0$ for some $t>0$ which is not possible.
\end{proof}

We move now to prove Theorem \ref{MT}.
\begin{proof}[Proof of Theorem \ref{MT}] For $x\geq \rho_{0}$ let $\rho^{-1}_{x}$ be the level set of $\rho$, $\rho^{-1}_{x}=\{\rho=x\}\times \Sa_{z}$. We will show that there is a sequence of level sets $\rho^{-1}_{x_{i}}$ with $x_{i}\uparrow +\infty$, ($x_{i}$ is strictly increasing), on which $\bar{u}$ tends to $-\infty$, that is,
\be
\max\{\bar{u}(p):p\in \rho^{-1}_{x_{i}}\}\rightarrow -\infty.
\ee
Then, as $\hat{\Delta} \bar{u}\geq 0$, it follows by the maximum principle that,
\begin{align}
\max\{\bar{u}(p):x_{i-1} & \leq \rho(p) \leq x_{i+1}\} \leq \\
& \leq \max\{\max\{\bar{u}(p):p\in \rho^{-1}_{x_{i-1}}\},\max\{\bar{u}(p):p\in \rho^{-1}_{x_{i}}\}\}\rightarrow -\infty
\end{align}
It follows that $N=e^{\bar{u}}$ tends uniformly to zero at infinity. 

Let $p_{i}$ be the sequence of Proposition \ref{SEXI}. Again, let $r_{i}=\hat{s}(p_{i})$ and let $\hat{\gamma}_{r_{i}}=\hat{\gamma}/r^{2}_{i}$. Scaled quantities constructed out of the metric $\hat{\gamma}_{r_{i}}$ are denoted with a subindex $r_{i}$. For instance the Gaussian curvature of $\hat{\gamma}_{r_{i}}$ is $\hat{\kappa}_{r_{i}}:=\hat{\kappa}/r_{i}^{2}$ where $\hat{\kappa}$ is the Gaussian curvature of $\hat{\gamma}$. We will scale too the harmonic function $\rho$ as $\rho_{r_{i}}:=\rho/\rho(p_{i})$ and note that we have, 
\be\label{EQA}
\rho_{r_{i}}(p_{i})=1,
\ee
also,
\be\label{EQB}
\left|\frac{\nabla \rho_{r_{i}}}{\rho_{r_{i}}}\right|^{2}_{r_{i}}\bigg|_{p_{i}}\geq b>0,
\ee
and
\be\label{EQC}
\left|\frac{\nabla \rho_{r_{i}}}{\rho_{r_{i}}}\right|^{2}_{r_{i}}\bigg|_{p}\leq \frac{c^{2}}{\hat{s}_{r_{i}}^{2}(p)},
\ee
valid on all $p$.

For given integer $k\geq 2$ the sequence of Riemannian annuli $(\mathcal{A}_{\hat{r}_{i}}(1/2,2^{k}),\hat{\gamma}_{r_{i}})$ collapses in volume with bounded curvature by Proposition \ref{SQVG}. It follows from standard theory (see \cite{MR950552}) and standard elliptic estimates for the system (\ref{SI})-(\ref{SII})-(\ref{SVI}) that there is a sequence of integers $k_{i}\rightarrow \infty$, a sequence of neighborhoods $\mathcal{U}_{i}$ of $\mathcal{A}_{\hat{r}_{i}}(1/2,2^{k})$ and of finite coverings (unwrapping) $\pi_{i}:\tilde{\mathcal{U}}_{i}\rightarrow \mathcal{U}_{i}$, such that $(\tilde{\mathcal{U}}_{i};\pi^{*}_{i}\hat{\gamma}_{r_{i}})$ sub-converges in $C^{\infty}$ to a $\Sa$-symmetric surface $(\tilde{\mathcal{U}},\tilde{\gamma})$ (half an infinite cylinder) of non-negative Gaussian curvature, with one end and one boundary component. Let $\tilde{s}$ be the limit of the lifted distance functions $\hat{s}_{r_{i}}$. We can write $\tilde{\mathcal{U}}=[1/2,\infty)_{\tilde{s}}\times \Sa$ because the distance function $\tilde{s}$ is invariant under the $\Sa$-action. Observe that because of (\ref{EQA}), (\ref{EQB}) and (\ref{EQC}) the functions $\rho_{r_{i}}\circ \pi_{i}$ sub-converge to a $\Sa$-symmetric non-constant $\Sa$-invariant positive harmonic function $\tilde{\rho}$ on $\tilde{\mathcal{U}}$. We will show now that $\tilde{\rho}$ tends to infinity at infinity. Following a similar notation as earlier, for $x\geq 1/2$ let $\tilde{s}^{-1}_{x}:=\{\tilde{s}=x\}\times \Sa$, that is, the level set of $\tilde{s}$ at the value $x$.  Now, observe that because $\tilde{\rho}$ is a $\tilde{\gamma}$-harmonic non-constant and $\Sa$-symmetric function, and because of (\ref{EQB}) we have,
\be\label{EQD}
\tilde{\ell}(x)\partial_{x}\tilde{\rho}(x)=b_{1}\neq 0,
\ee
for all $x\geq 1/2$, where $\tilde{\ell}(x)$ is the $\tilde{\gamma}$-length of $\tilde{s}^{-1}_{x}$. But because the Gaussian curvature $\tilde{\kappa}$ of $\tilde{\gamma}$ is non-negative it is easy to see that $\tilde{\ell}(x)$ grows at most linearly, namely, $\tilde{\ell}(x)\leq b_{2}x$ with $b_{2}>0$. This information together with (\ref{EQD}) implies that $\tilde{\rho}$ tends to infinity as $\tilde{s}$ tends to infinity. The importance of this is that we can use $\tilde{\rho}$ as a harmonic coordinate extending to infinity. We will use this fact at the end of the proof.

The task now is to analyze the potentials $\bar{u}$ and $\bar{v}$ over $\mathcal{U}_{i}$ and by doing so prove that a subsequence of $\bar{u}$ seen over $\mathcal{U}_{i}$ tends to $-\infty$, and therefore that $N$ tends to $0$ by the argument at the beginning of the proof. We will prove this by contradiction, assuming that the maximum of $\bar{u}$ over the $\mathcal{U}_{i}$ does not tend to $-\infty$ on any subsequence. We will be able to pass to a limit and show that this would contradict Proposition \ref{NSETI} which says that charged $\Sa$-symmetric potentials do not extend to infinity.  

Assume that the maximum of $\bar{u}$ over the $\mathcal{U}_{i}$ does not tend to $-\infty$ on any subsequence. To start observe that, before passing to any limit, we have, 
\be
\int_{\rho^{-1}_{x}}\nabla_{n}\rho d\ell = c_{1}>0,
\ee
for any $x$, where $n$ here is the outgoing normal to the loop $\rho^{-1}_{x}$. Recall that $\rho_{i}=\rho(p_{i})$. Now, from the previous identity, we deduce that there is $q_{i}\in \rho^{-1}_{x=\rho_{i}}$ such that,
\be
\rho_{i}\ell_{i}\left|\frac{\nabla_{n}\rho}{\rho}\right|\bigg|_{q_{i}} = c_{1}
\ee
Scaling $n_{r_{i}}=r_{i}n$ and $\ell_{r_{i}}=\ell_{i}/_{r_{i}}$ preserves the previous identity, namely,
 \be
\rho_{i}\ell_{r_{i}}\left|\frac{\nabla_{n_{r_{i}}}\rho}{\rho}\right|\bigg|_{q_{i}} = c_{1}
\ee
As,
\be
\left|\frac{\nabla_{n_{r_{i}}}\rho}{\rho}\right|\bigg|_{q_{i}}\rightarrow b_{2}>0
\ee
we conclude that,
\be
\rho_{i}\ell_{r_{i}}\rightarrow c_{2} >0.
\ee
We will use this information below.

Now, the charge $Q$ has the expression,
\be
Q=\int_{\rho^{-1}_{x=\rho_{i}}} (\rho e^{-2\bar{u}}\nabla_{n}\bar{v})d\ell
\ee
So, again, there are $q_{i}\in \rho^{-1}_{x=\rho_{i}}$ such that,
\be
Q = \rho_{i}\ell_{i} (e^{-2\bar{u}}\nabla_{n}\bar{v})\bigg|_{q_{i}}. 
\ee
It is more convenient to write this as,
\be\label{QFUND}
Q = \underbrace{(\rho_{i}\ell_{r_{i}})\vphantom{\nabla_{n_{r_{i}}} }}_{\rm (I)}\underbrace{(e^{-\bar{u}})|_{q_{i}}  \vphantom{\nabla_{n_{r_{i}}}} }_{\rm (II)}\underbrace{(e^{-\bar{u}}\nabla_{n_{r_{i}}}\bar{v})|_{q_{i}}}_{\rm (III)}. 
\ee
We will think the terms (I), (II) and (III) which depend on $i$ as sequences (I)$_{i}$, (II)$_{i}$ and (III)$_{i}$. We know that (I)$_{i}$ tends to a non-zero constant as $i\rightarrow \infty$. So, if a subsequence (III)$_{i_{j}}$ of (III)$_{i}$ tends to zero then (II)$_{i_{j}}$ tends to infinity because $Q\neq 0$. Thus we get a divergent sequence of points $q_{i_{j}}$ on which $\bar{u}(q_{i_{j}})$ tends to $-\infty$ which is not the case by the initial assumption. On the other hand as (III)$_{i}$ is bounded above by (\ref{ESTORIGINAL}) we deduce that there are $0<m_{1}<m_{2}$ such that,
\be\label{LAST}
m_{1}\leq {\rm (III)}_{i}\leq m_{2}
\ee
When we consider the uniform bounds for $|\tilde{u}(q_{i_{j}})|$, the estimate (\ref{LAST}) and the estimates (\ref{ESTORIGINAL}), we deduce that one can extract a subsequence of the variables $\bar{u}$ and $\bar{v}-\bar{v}(q_{i_{j}})$ lifted to $\tilde{U}_{i_{j}}$, converging in $C^{\infty}$ to a {\it charged} $\Sa$-symmetric solution of (\ref{EI})-(\ref{EII}) over $\tilde{U}=[\tilde{\rho}_{0},\infty)_{\tilde{\rho}}\times \Sa$. But we know from section \ref{ESSS} that such a solution does not exist and we reach thus a contradiction.  
\end{proof}

\noindent {\bf Acknowledgement}. I would like to thank Marcus Khuri and Gilbert Weinstein for having shown me (before this work started) that charged solutions having a suitable expansion at infinity do not exist. 

\bibliographystyle{plain}
\bibliography{Master}

\end{document}